\newtheorem{te}{Theorem}
\newtheorem{lem}{Lemma}
\newtheorem{de}{Definition}
\providecommand{\keywords}[1]
{\small	\textbf{\textit{Keywords:}} #1}
\title{Lie integrability by quadratures for symplectic, cosymplectic, contact and cocontact Hamiltonian systems}
\author{R. Azuaje \\ Departamento de Física, Universidad Autónoma Metropolitana Unidad Iztapalapa\\ San Rafael Atlixco 186, C.P. 09340, Ciudad de México, México \\ e-mail: razuaje@xanum.uam.mx}
\begin{document}
\maketitle

\begin{abstract}
In this paper we present the theorem on Lie integrability by quadratures for time-independent Hamiltonian systems on symplectic and contact manifolds, and for time-dependent Hamiltonian systems on cosymplectic and cocontact manifolds. We show that having a solvable Lie algebra of constants of motion for a Hamiltonian system is equivalent to having a solvable Lie algebra of symmetries of the vector field defining the dynamics of the system, which allows us to find solutions of the equations of motion by quadratures.

\keywords{Solvable Lie algebras. Symmetries. Contact Hamiltonian systems. Quadratures.}

\end{abstract}

\section{Introduction}
\label{sec1}

It is well known that the existence of symmetries for a mechanical system, allows a reduction of the system or in other words allows a simplification of the problem of finding the equations of motion. It is also well known, at least in the symplectic framework, that constants of motion are related to symmetries \cite{AMRC2019,AKN2006,Roman2020}. The famous Liouville theorem for time-independent and time dependent Hamiltonian systems on symplectic and cosymplectic manifolds respectively, asserts that if for a Hamiltonian system with $n$ degrees of freedom (the dimension of the phase space is $2n$ or $2n+1$), the knowledge of $n$ constants of motion (or conserved quantities) in involution allows us to find the solutions of the equations of motion by quadratures, i.e. by using finitely many algebraic operations (including taking inverse functions) and calculating the integrals of known functions \cite{AMRC2019,AKN2006,GPS2002,BBT2003,LR89,TC2018,BB98,GVY2008}. So we say that a Hamiltonian system with $n$ degrees of freedom is integrable in the sense of Liouville if we can find $n$ independent constants of motion in involution. In involution means that the constants of motion generate an Abelian Lie algebra of functions with the Poisson bracket, or equivalently, the Hamiltonian vector fields related to such constants of motion are symmetries of the dynamical vector field that generate an Abelian Lie algebra of vector fields with the Lie bracket (or commutator), i.e. the symmetries commute to each other. For a dynamical system, the notion of Lie integrability by quadratures is based on the existence of a solvable Lie algebra of symmetries of the dynamical vector field \cite{AKN2006,Kozlov96,CFGR2015,GG2020,Cetal2016}, which is a more general notion than Liouville integrability. The concept of solvable Lie algebra is more general than the one of Abelian Lie algebra, indeed, every Abelian Lie algebra is trivially a solvable Lie algebra but the converse statement is not true.

The concept of a Lie algebra is due to the Norwegian mathematician Marius
Sophus Lie (1842–1899), who developed the theory of Lie groups and Lie algebras,
trying to solve, or at least simplify, ordinary differential equations. At the beginning,
Lie based his work on analogies with the group theory developed by Evariste Galois
(1811–1832) in order to solve algebraic equations that were quadratic, cubic or
quartic \cite{Gilmor2008}. Lie’s theorem establishes that if for a vector field we have sufficient linearly independent symmetries that generate a solvable Lie algebra over the Lie bracket of vector fields, then the solutions of the equations of motion of the dynamical system defined by such vector field can be found by quadratures.

The aim of this paper is to present under the formalisms of symplectic geometry, cosymplectic geometry, contact geometry and cocontact geometry the theorem of Lie integrability by quadratures for Hamiltonian systems. Contact and cocontact Hamiltonian mechanics is a subject of active research these days, see for example \cite{LL2020,BG2021,GLR2022,AE2023}. We follow the spirit of Lie, i.e., we focus on finding solutions of the Hamilton equations of motion by quadratures, in this paper we do not address the problem of the existence of action-angle coordinates. This paper is organized as follows: In section \ref{sec2} we present a brief review of Hamiltonian systems on symplectic, cosymplectic, contact and cocontact manifolds; in section \ref{sec3} we present a brief review of the Lie's theorem on integrability by quadratures of a dynamical system defined by a smooth vector field on $\mathbb{R}^{n}$. In section \ref{sec4} we present  the Lie's theorem on integrability by quadratures for (time-independent) symplectic and (time-dependent) cosymplectic Hamiltonian systems. Finally in section \ref{sec5} we present  the Lie's theorem on integrability by quadratures for time-independent and time-dependent contact Hamiltonian systems.

To finish this introduction, we present the concept of Lie algebra and solvable Lie algebra in order to set up the language and notation employed in this paper, for details see \cite{Gilmor2008}.

\begin{de}
A Lie algebra over $\mathbb{R}$ is a real vector space $\mathfrak{g}$ equipped with a bilinear map  $[,]:\mathfrak{g}\times \mathfrak{g} \longrightarrow \mathfrak{g}$ called Lie bracket that satisfies $[X,Y]=-[Y,X]$ $\forall X,Y\in \mathfrak{g}$ and
$[X,[Y,Z]]+[Y,[Z,X]]+[Z,[X,Y]]=0$ $\forall X,Y,Z\in \mathfrak{g}$.
\end{de}

\begin{de}
Let $\mathfrak{g}$ be a Lie algebra.
\begin{itemize}
\item A Lie subalgebra of $\mathfrak{g}$ is a vector subspace $L$ of $\mathfrak{g}$ such that $[X,Y]\in L$ $\forall X,Y\in L$.
\item A vector subspace $I$ of $\mathfrak{g}$ is called an ideal of $\mathfrak{g}$ if $[X,Y]\in I$ $\forall X\in I$ and $\forall Y\in \mathfrak{g}$. Note that an ideal is also a Lie subalgebra.
\item We say that $\mathfrak{g}$ is a solvable Lie algebra if there exists a finite sequence of Lie subalgebras of $\mathfrak{g}$, let us say $L_{1},\ldots,L_{n}$ with $n=dim(\mathfrak{g})$, such that $\lbrace 0\rbrace =L_{0}\subset L_{1} \subset \cdots \subset L_{n-1}\subset L_{n}=\mathfrak{g}$ and $L_{i}$ is an ideal of $L_{i+1}$ of codimension $1$ for $i=0,1,\ldots,n-1$.
\end{itemize}
\end{de}

\section{Symplectic, cosymplectic, contact and cocontact Hamiltonian systems}
\label{sec2}

In this section we review the formalism of symplectic geometry, cosymplectic geometry, contact geometry, cocontact geometry and the formulations of Hamiltonian mechanics under these geometries. The purpose of this section is to set up the notation and terminology used in this paper.

Let us start with a brief review of symplectic geometry and the formulation of time-independent Hamiltonian mechanics (for details see \cite{AMRC2019,LR89,Torres2020,Lee2012}).
 
Let $(M,\omega)$ be a symplectic manifold of dimension $2n$. Around any point $p\in M$ there exist local coordinates $(q^{1},\cdots,q^{n},p_{1},\cdots,p_{n})$, called canonical coordinates or Darboux coordinates, such that
\begin{equation}
\omega=dq^{i}\wedge dp_{i}.
\end{equation}
In this paper we adopt the Einstein summation convention ( i.e., a summation over repeated indices is assumed).

For each $f\in C^{\infty}(M)$ is assigned a vector field $X_{f}$ on $M$, called the Hamiltonian vector field for $f$, according to
\begin{equation}
X_{f}\lrcorner \omega \ = \ df \ .
\end{equation}
In canonical coordinates, $X_{f}$ reads
\begin{equation}
X_{f}=\frac{\partial f}{\partial p_{i}}\frac{\partial}{\partial q^{i}}-\frac{\partial f}{\partial q^{i}}\frac{\partial}{\partial p_{i}}\ .
\end{equation}
The assignment $f\longmapsto X_{f}$ is linear, that is
\begin{equation}
X_{f+\alpha g}\ = \ X_{f}+\alpha X_{g}\ ,
\end{equation}
$\forall f,g\in C^{\infty}(M)$ and $\forall \alpha \in\mathbb{R}$. Given $f,g \in C^{\infty}(M)$ the Poisson bracket of $f$ and $g$ is defined by
\begin{equation}
\lbrace f,g\rbrace=X_{g}f=\omega(X_{f},X_{g}).
\end{equation}
In canonical coordinates we have
\begin{equation}
\lbrace f,g\rbrace \ = \ \frac{\partial f}{\partial q^{i}}\frac{\partial g}{\partial p_{i}}\,-\,\frac{\partial f}{\partial p_{i}}\frac{\partial g}{\partial q^{i}} \ .
\end{equation}

The theory of time-independent Hamiltonian systems is naturally constructed within the mathematical formalism of symplectic geometry. Given $H\in C^{\infty}(M)$, the dynamics of the Hamiltonian system on $(M,\omega)$ (the phase space) with Hamiltonian function $H$ is defined by the Hamiltonian vector field $X_{H}$, that is, the trajectories of the system $\psi(t)=(q^{1}(t),\cdots,q^{n}(t),p_{1}(t),\cdots,p_{n}(t))$ are the integral curves of $X_{H}$, they satisfy the Hamilton's equations of motion
\begin{equation}
\dot{q^{i}} =\frac{\partial H}{\partial p_{i}}, \hspace{1cm}
\dot{p_{i}} =-\frac{\partial H}{\partial q^{i}}\qquad ;\qquad i=1,2,3,\ldots,n \ .
\end{equation}

The evolution (the temporal evolution) of a function $f\in C^{\infty}(M)$ (a physical observable) along the trajectories of the system is given by
\begin{equation}
\dot{f}= L_{X_{H}}f=X_{H}f=\lbrace f,H\rbrace \ ,
\end{equation}
where $L_{X_{H}}f$ is the Lie derivative of $f$ with respect to $X_{H}$. We say that $f$ is a constant of motion of the system if it is constant along the trajectories of the system, that is, $f$ is a constant of motion if $L_{X_{H}}f=0$ ($\lbrace f,H\rbrace=0$).

Now let us continue with cosymplectic geometry and the formulation of time-dependent Hamiltonian systems under cosymplectic manifolds. 

\begin{de}
Let $M$ be a $2n+1$ dimensional smooth manifold. A cosymplectic structure on $M$ is a couple $(\Omega,\eta)$, where $\Omega$ is a closed 2-form on $M$ and $\eta$ is a closed 1-form on $M$ such that $\eta\wedge\Omega^{n}\neq 0$. If $(\Omega,\eta)$ is a cosymplectic structure on $M$ we say that $(M,\Omega,\eta)$ is a cosymplectic manifold.
\end{de}

Let $(M,\Omega,\eta)$ be a cosymplectic manifold of dimension $2n+1$. Around any point $p\in M$ there exist local coordinates $(q^{1},\cdots,q^{n},p_{1},\cdots,p_{n},t)$, called canonical coordinates or Darboux coordinates, such that
\begin{equation}
\Omega=dq^{i}\wedge dp_{i}\hspace{1cm}\mathrm{and}\hspace{1cm}\eta=dt.
\end{equation}
There exists a distinguished vector field $R$ on $M$, called the Reeb vector field, which obeys
\begin{equation}
R\lrcorner \Omega =0 \hspace{1cm}\mathrm{and}\hspace{1cm} R\lrcorner \eta =1.
\end{equation}
In canonical coordinates we have $R=\frac{\partial}{\partial t}$.

For each $f\in C^{\infty}(M)$ is assigned a vector field $X_{f}$ on $M$, called the Hamiltonian vector field for $f$, according to
\begin{equation}
X_{f}\lrcorner \Omega =df-(Rf)\eta \hspace{1cm}\mathrm{and}\hspace{1cm} X_{f}\lrcorner \eta =0.
\end{equation}
In canonical coordinates we have
\begin{equation}
X_{f}=\frac{\partial f}{\partial p_{i}}\frac{\partial}{\partial q^{i}}-\frac{\partial f}{\partial q^{i}}\frac{\partial}{\partial p_{i}}.
\end{equation}
The assignment $f\longmapsto X_{f}$ is linear, that is
\begin{equation}
X_{f+\alpha g}=X_{f}+\alpha X_{g},
\end{equation}
$\forall f,g\in C^{\infty}(M)$ and $\forall \alpha \in\mathbb{R}$. Given $f,g \in C^{\infty}(M)$ the Poisson bracket of $f$ and $g$ is defined by
\begin{equation}
\lbrace f,g\rbrace=X_{g}f=\Omega(X_{f},X_{g}).
\end{equation}
In canonical coordinates, it reads
\begin{equation}
\lbrace f,g\rbrace=\frac{\partial f}{\partial q^{i}}\frac{\partial g}{\partial p_{i}}-\frac{\partial f}{\partial p_{i}}\frac{\partial g}{\partial q^{i}}.
\end{equation}

The theory of time-dependent Hamiltonian systems can be developed under the mathematical formalism of cosymplectic geometry (see \cite{LR89,CLL92,LS2017} Given $H\in C^{\infty}(M)$, the dynamics of the Hamiltonian system on $(M,\Omega,\eta)$ (the phase space) with Hamiltonian function $H$ is defined by the evolution vector field $E_{H}=X_{H}+R$ which is the solution to equations
\begin{equation}
E_{H}\lrcorner \Omega =dH-(RH)\eta \hspace{1cm}\mathrm{and}\hspace{1cm} E_{H}\lrcorner \eta =1.
\end{equation}
The trajectories $\psi(t)=(q^{1}(t),\cdots,q^{n}(t),p_{1}(t),\cdots,p_{n}(t),t)$ of the system are the integral curves of $E_{H}$, they satisfy the Hamilton equations of motion
\begin{equation}
\dot{q^{i}} =\frac{\partial H}{\partial p_{i}}, \hspace{1cm}
\dot{p_{i}} =-\frac{\partial H}{\partial q^{i}}.
\end{equation}

The evolution of a function $f\in C^{\infty}(M)$ (an observable) along the trajectories of the system is given by
\begin{equation}
\dot{f}= L_{E_{H}}f=E_{H}f=X_{H}f+Rf=\lbrace f,H\rbrace+\frac{\partial f}{\partial t}.
\end{equation}
We say that a function $f\in C^{\infty}(M)$ is a constant of motion of the system if it is constant along the trajectories of the system, that is, $f$ is a constant of motion if $L_{E_{H}}f=0$ ($\lbrace f,H\rbrace+\frac{\partial f}{\partial t}=0$).

Now we move to the formalism of contact Hamiltonian systems which naturally describe dissipative systems (for details see \cite{LS2017,LL2019,BCT2017}), and they also have applications in thermodynamics, statistical mechanics and others \cite{Bravetti2017}. 

\begin{de}
\label{deContact}
Let $M$ be a $2n+1$ dimensional smooth manifold. A contact structure on $M$ is a 1-form $\theta$ on $M$ such that $\theta\wedge d\theta^{n}\neq 0$. If $\theta$ is a contact structure on $M$ we say that $(M,\theta)$ is a contact manifold.
\end{de}

There is a wider notion of contact manifolds, some authors define a contact structure on a manifold as a one-codimensional maximally non-integrable distribution \cite{LL2020,BH2016,Geiges2009,GG2022}. Locally a contact structure can be expressed as the kernel of a one-form $\theta$ satisfying $\theta\wedge d\theta^{n}\neq 0$, $\theta$ is called a local contact form. However, not every contact structure admits a global contact form. When a contact structure admits a global contact form, the contact manifold is called co-oriented. Every contact manifold $(M,\theta)$ from definition \ref{deContact} is a co-oriented contact manifold in this wider sense by taking the distribution $Ker(\theta)$. In this paper we restrict ourselves to co-oriented contact manifold and we refer to them as contact manifolds. 

Let $(M,\theta)$ be a contact manifold of dimension $2n+1$. Around any point $p\in M$ there exist local coordinates $(q^{1},\cdots,q^{n},p_{1},\cdots,p_{n},z)$, i.e. Darboux coordinates, such that
\begin{equation}
\theta=dz-p_{i}dq^{i}.
\end{equation}
There exists a distinguished vector field $R$ on $M$, called the Reeb vector field, which obeys 
\begin{equation}
R\lrcorner \theta =1 \hspace{1cm}\mathrm{and}\hspace{1cm} R\lrcorner d\theta =0.
\end{equation}
In canonical coordinates we simply have $R=\frac{\partial}{\partial z}$.

For each $f\in C^{\infty}(M)$ is assigned a vector field $X_{f}$ on $M$, called the Hamiltonian vector field for $f$, according to
\begin{equation}
X_{f}\lrcorner \theta = -f \hspace{1cm}\mathrm{and}\hspace{1cm} X_{f}\lrcorner d\theta =df-(Rf)\theta.
\end{equation}
In canonical coordinates we have
\begin{equation}
X_{f}=\frac{\partial f}{\partial p_{i}}\frac{\partial}{\partial q^{i}}-\left( \frac{\partial f}{\partial q^{i}}+p_{i}\frac{\partial f}{\partial z}\right) \frac{\partial}{\partial p_{i}}+\left( p_{i}\frac{\partial f}{\partial p_{i}}-f\right)\frac{\partial}{\partial z}.
\end{equation}
It can be checked that the assignment $f\longmapsto X_{f}$ is linear, that is
\begin{equation}
X_{f+\alpha g}=X_{f}+\alpha X_{g},
\end{equation}
$\forall f,g\in C^{\infty}(M)$ and $\forall \alpha \in\mathbb{R}$.

Symplectic and cosymplectic manifolds are Poisson manifolds (symplectic and cosymplectic structures define Poisson brackets), but a contact manifold is strictly a Jacobi manifold, i.e., a contact structure on a manifold defines a Jacobi bracket. Given $f,g \in C^{\infty}(M)$ the Jacobi bracket of $f$ and $g$ is defined by
\begin{equation}
\lbrace f,g\rbrace=X_{g}f+fRg=\theta([X_{f},X_{g}]).
\end{equation}
In canonical coordinates we have
\begin{equation}
\lbrace f,g\rbrace=\frac{\partial f}{\partial q^{i}}\frac{\partial g}{\partial p_{i}}-\frac{\partial f}{\partial p_{i}}\frac{\partial g}{\partial q^{i}}+\frac{\partial f}{\partial z}\left( p_{i}\frac{\partial g}{\partial p_{i}}-g\right)-\frac{\partial g}{\partial z}\left( p_{i}\frac{\partial f}{\partial p_{i}}-f\right).
\end{equation}

Hamiltonian systems on contact manifolds are called contact Hamiltonian systems. Given $H\in C^{\infty}(M)$, the dynamics of the Hamiltonian on $(M,\theta)$ (the phase space) with Hamiltonian function $H$ is defined by the Hamiltonian vector field $X_{H}$.
In canonical coordinates we have
\begin{equation}
X_{H}=\frac{\partial H}{\partial p_{i}}\frac{\partial}{\partial q^{i}}-\left( \frac{\partial H}{\partial q^{i}}+p_{i}\frac{\partial H}{\partial z}\right) \frac{\partial}{\partial p_{i}}+\left( p_{i}\frac{\partial H}{\partial p_{i}}-H\right)\frac{\partial}{\partial z}.
\end{equation}
The trajectories $\psi(t)=(q^{1}(t),\cdots,q^{n}(t),p_{1}(t),\cdots,p_{n}(t),z(t))$ of the system are the integral curves of $X_{H}$, they satisfy the dissipative Hamilton equations of motion
\begin{equation}
\dot{q^{i}} =\frac{\partial H}{\partial p_{i}}, \hspace{1cm}
\dot{p_{i}} =-\left( \frac{\partial H}{\partial q^{i}}+p_{i}\frac{\partial H}{\partial z}\right) , \hspace{1cm}
\dot{z}=p_{i}\frac{\partial H}{\partial p_{i}}-H.
\end{equation}

The evolution of a function $f\in C^{\infty}(M)$ (an observable) along the trajectories of the system reads
\begin{equation}
\dot{f}= L_{X_{H}}f=X_{H}f=\lbrace f,H\rbrace-fRH.
\end{equation}
We say that a function $f\in C^{\infty}(M)$ is a constant of motion of the system if it is constant along the trajectories of the system, that is, $f$ is a constant of motion if $L_{X_{H}}f=0$ ($\lbrace f,H\rbrace-fRH=0$).

Finally we complete the notation and language employed in this paper. Here we briefly review the formalism of time-dependent contact Hamiltonian systems introduced in \cite{Letal2022}.

\begin{de}
Let $M$ be a $2n+2$ dimensional smooth manifold. A cocontact structure on $M$ is a couple $(\theta,\eta)$ of 1-forms on $M$ such that $\eta$ is closed and $\eta\wedge\theta\wedge(d\theta)^{n}\neq 0$. If $(\theta,\eta)$ is a cocontact structure on $M$ we say that $(M,\theta,\eta)$ is a cocontact manifold.
\end{de}

Let $(M,\theta,\eta)$ be a cocontact manifold of dimension $2n+2$. Around any point $p\in M$ there exist local coordinates $(t,q^{1},\cdots,q^{n},p_{1},\cdots,p_{n},z)$, called canonical coordinates or Darboux coordinates, such that
\begin{equation}
\theta=dz-p_{i}dq^{i}\hspace{1cm}\mathrm{and}\hspace{1cm}\eta=dt.
\end{equation}
There exist two distinguished vector fields $R_{z}$ and $R_{t}$ on $M$, called the contact Reeb vector field and the time Reeb vector field respectively, such that 
\begin{equation}
\left\lbrace \begin{array}{c}
R_{z}\lrcorner \eta=0, \\ 
R_{z}\lrcorner \theta=1, \\ 
R_{z}\lrcorner d\theta=0
\end{array} \right. 
\end{equation}
and
\begin{equation}
\left\lbrace \begin{array}{c}
R_{t}\lrcorner \eta=1, \\ 
R_{t}\lrcorner \theta=0, \\ 
R_{t}\lrcorner d\theta=0
\end{array} \right. .
\end{equation}
In canonical coordinates, we have $R_{z}=\frac{\partial}{\partial z}$ and $R_{t}=\frac{\partial}{\partial t}$.

For each $f\in C^{\infty}(M)$ is assigned a vector field $X_{f}$ on $M$, called the contact Hamiltonian vector field for $f$, according to
\begin{equation}
X_{f}\lrcorner \theta=-f,\hspace{1cm}X_{f}\lrcorner d\theta=df-(R_{z}f)\theta-(R_{t}f)\eta \hspace{1cm}\mathrm{and}\hspace{1cm} X_{f}\lrcorner \eta =0.
\end{equation}
In canonical coordinates, we have
\begin{equation}
X_{f}=\frac{\partial f}{\partial p_{i}}\frac{\partial}{\partial q^{i}}-\left( \frac{\partial f}{\partial q^{i}}+p_{i}\frac{\partial f}{\partial z}\right) \frac{\partial}{\partial p_{i}}+\left( p_{i}\frac{\partial f}{\partial p_{i}}-f\right)\frac{\partial}{\partial z}.
\end{equation}
We can observe that the assignment $f\longmapsto X_{f}$ is linear, that is
\begin{equation}
X_{f+\alpha g}=X_{f}+\alpha X_{g},
\end{equation}
$\forall f,g\in C^{\infty}(M)$ and $\forall \alpha \in\mathbb{R}$. Like contact manifolds, cocontact manifolds are Jacobi manifolds; given $f,g \in C^{\infty}(M)$ the Jacobi bracket of $f$ and $g$ is defined by
\begin{equation}
\lbrace f,g\rbrace=X_{g}f+fR_{z}g.
\end{equation}
In canonical coordinates, we have
\begin{equation}
\lbrace f,g\rbrace=\frac{\partial f}{\partial q^{i}}\frac{\partial g}{\partial p_{i}}-\frac{\partial f}{\partial p_{i}}\frac{\partial g}{\partial q^{i}}+\frac{\partial f}{\partial z}\left( p_{i}\frac{\partial g}{\partial p_{i}}-g\right)-\frac{\partial g}{\partial z}\left( p_{i}\frac{\partial f}{\partial p_{i}}-f\right).
\end{equation}

Time-dependent contact Hamiltonian systems are defined under the formalism of cocontact geometry. Given $H\in C^{\infty}(M)$, the dynamics of the Hamiltonian system on $(M,\theta,\eta)$ (the phase space) with Hamiltonian function $H$ is defined by the evolution vector field $E_{H}=X_{H}+R_{t}$ which is the solution to equations
\begin{equation}
E_{H}\lrcorner \theta=-H,\hspace{1cm}E_{H}\lrcorner d\theta=dH-(R_{z}H)\theta-(R_{t}H)\eta \hspace{1cm}\mathrm{and}\hspace{1cm} E_{H}\lrcorner \eta =1.
\end{equation}
In canonical coordinates
\begin{equation}
E_{H}=\frac{\partial H}{\partial p_{i}}\frac{\partial}{\partial q^{i}}-\left( \frac{\partial H}{\partial q^{i}}+p_{i}\frac{\partial H}{\partial z}\right) \frac{\partial}{\partial p_{i}}+\left( p_{i}\frac{\partial H}{\partial p_{i}}-H\right)\frac{\partial}{\partial z}+\frac{\partial}{\partial t}.
\end{equation}
The trajectories $\psi(s)=(t(s),q^{1}(s),\cdots,q^{n}(s),p_{1}(s),\cdots,p_{n}(s),z(s))$ of the system are the integral curves of $E_{H}$, they satisfy the equations
\begin{equation}
\dot{q^{i}} =\frac{\partial H}{\partial p_{i}}, \hspace{1cm}
\dot{p_{i}} =-\left( \frac{\partial H}{\partial q^{i}}+p_{i}\frac{\partial H}{\partial z}\right) , \hspace{1cm}
\dot{z}=p_{i}\frac{\partial H}{\partial p_{i}}-H, \hspace{1cm} \dot{t}=1.
\end{equation}
Since $\dot{t}=1$, we can take $t=s$, which implies that the temporal parameter for the system is $t$, that is, the trajectories of the system are parametrized by $t$
\begin{equation}
\psi(t)=(t,q^{1}(t),\cdots,q^{n}(t),p_{1}(t),\cdots,p_{n}(s),z(t))
\end{equation}
and they obey the dissipative Hamilton equations of motion 
\begin{equation}
\dot{q^{i}} =\frac{\partial H}{\partial p_{i}}, \hspace{1cm}
\dot{p_{i}} =-\left( \frac{\partial H}{\partial q^{i}}+p_{i}\frac{\partial H}{\partial z}\right) , \hspace{1cm}
\dot{z}=p_{i}\frac{\partial H}{\partial p_{i}}-H.
\end{equation}
The evolution of a function $f\in C^{\infty}(M)$ (an observable) along the trajectories of the system is given by
\begin{equation}
\dot{f}= L_{E_{H}}f=E_{H}f=X_{H}f+R_{t}f=\lbrace f,H\rbrace-fR_{z}H+R_{t}f.
\end{equation}
We say that a function $f\in C^{\infty}(M)$ is a constant of motion of the system if it is constant along the trajectories of the system, that is, $f$ is a constant of motion if $L_{E_{H}}f=0$ ($\lbrace f,H\rbrace-fR_{z}H+R_{t}f=0$).

\section{Lie algebras of symmetries for vector fields in $\mathbb{R}^{n}$}
\label{sec3}

Here we present a brief review of Lie's theorem for vector differential equations in $\mathbb{R}^{n}$. For this, let us consider a smooth vector field $v$ on $\mathbb{R}^{n}$.

\begin{de}
We say that a vector field $u$ on $\mathbb{R}^{n}$ is a symmetry of the vector field $v$ if $[u,v]=0$, where $[,]$ is the Lie bracket of vector fields.
\end{de}

Let us consider a dynamical system on $\mathbb{R}^{n}$ defined by the vector field $v$. If we consider local coordinates $x=(x^{1},\ldots,x^{n})$ on $\mathbb{R}^{n}$ then the equation of motion of the system is given by the vector differential equation $\dot{x}=v(x)$. If we write $v$ in terms of the coordinate basis vectors $\frac{\partial}{\partial x^{i}}$ as
\begin{equation}
v(x)=v^{1}(x)\frac{\partial}{\partial x^{1}}+\cdots+v^{n}(x)\frac{\partial}{\partial x^{n}}, 
\end{equation}
then the equation $\dot{x}=v(x)$ is equivalent to the following system of differential equations
\begin{equation}\label{eq1}
\left\lbrace \begin{array}{c}
\dot{x}^{1}=v^{1}(x^{1},\ldots,x^{n}), \\
\vdots \\
\dot{x}^{n}=v^{n}(x^{1},\ldots,x^{n}).
\end{array} \right.
\end{equation}
Let $u$ be a smooth vector field on $\mathbb{R}^{n}$. It follows from the Straightening Out Theorem \cite{AMR88,Arnold92} that there are coordinates $(y^{1},\ldots,y^{n})$ on $\mathbb{R}^{n}$ such that $u(y)=\frac{\partial}{\partial y^{n}}$. Observe that if $u$ is a symmetry of $v$, that is $[u,v]=0$, then the component functions $\overline{v}^{1},\ldots,\overline{v}^{n}$ of $v$ in terms of the coordinate basis vectors $\frac{\partial}{\partial y^{i}}$ do not depend on the coordinate $y^{n}$, indeed,
\begin{equation}
0=[u,v]=\left[ \frac{\partial}{\partial y^{n}},\overline{v}^{i}\frac{\partial}{\partial y^{i}}\right] =\frac{\partial \overline{v}^{i}}{\partial y^{n}}\frac{\partial}{\partial y^{i}},
\end{equation}
therefore $\frac{\partial \overline{v}^{i}}{\partial y^{n}}=0$, i.e. $\overline{v}^{i}=\overline{v}^{i}(y^{1},\ldots,y^{n-1})$ for $i=1,\ldots,n$. So we have that the equation of motion of the system is equivalent to the system of differential equations 
\begin{equation}\label{eq2}
\left\lbrace \begin{array}{c}
\dot{y}^{1}=\overline{v}^{1}(y^{1},\ldots,y^{n-1}), \\ 
\vdots\\
\dot{y}^{n-1}=\overline{v}^{n-1}(y^{1},\ldots,y^{n-1}),\\ 
\dot{y}^{n}=\overline{v}^{n}(y^{1},\ldots,y^{n-1}).
\end{array} \right.
\end{equation}
Now the problem of solving the equation of motion of the system reduces to solving the following system of $n-1$ differential equations
\begin{equation}\label{eq3}
\left\lbrace \begin{array}{c}
\dot{y}^{1}=\overline{v}^{1}(y^{1},\ldots,y^{n-1}), \\ 
\vdots\\ 
\dot{y}^{n-1}=\overline{v}^{n-1}(y^{1},\ldots,y^{n-1}),
\end{array} \right.
\end{equation}
and integrating the differential equation $\dot{y}^{n}=\overline{v}^{n}(y^{1},\ldots,y^{n-1})$. Note that if we could repeat this idea on system (\ref{eq3}), we would obtain a system of $n-2$ differential equations and two integrable differential equations. That is possible if we have another symmetry $w$ of the vector field $v$ such that $[w,u]=\lambda u$, where $\lambda$ is a non zero scalar. Indeed, first let us see that the component functions of the smooth vector field $w$ do not depend on the coordinate $y^{n}$, 
\begin{equation}
\begin{split}
\lambda u=[w,u] & \Longrightarrow  \lambda \frac{\partial}{\partial y^{n}}=\left[ w^{i}\frac{\partial}{\partial y^{i}},\frac{\partial}{\partial y^{n}}\right] \\
& \Longrightarrow  \lambda \frac{\partial}{\partial y^{n}}= - \frac{\partial w^{i}}{\partial y^{n}}\frac{\partial}{\partial y^{i}}\\
& \Longrightarrow  \frac{\partial w^{i}}{\partial y^{n}}=0, \hspace{.3cm} \forall i\in \lbrace 1,\ldots,n-1\rbrace,
\end{split}
\end{equation}
so that $w^{i}=w^{i}(y^{1},\ldots,y^{n-1})$ for $i=1,\ldots, n-1$; thus the smooth vector field $w^{1}\frac{\partial}{\partial y^{1}}+\cdots+w^{n-1}\frac{\partial}{\partial y^{n-1}}$ is a symmetry of the smooth vector field $\overline{v}^{1}\frac{\partial}{\partial y^{1}}+\cdots+\overline{v}^{n-1}\frac{\partial}{\partial y^{n-1}}$ and we can apply the previous process to reduce system (\ref{eq3}) to a system of $n-2$ differential equations and one integrable differential equation, which leads us to a system of $n-2$ differential equations and two integrable differential equations. By repeating this process a sufficient number of times, we obtain the solution of the original system of differential equations just by integrating some differential equations and doing some algebraic operations (because of the change of coordinates). 

On the other hand, we say that a system of differential equations is integrable by quadratures if we can solve it by integrating some known functions and doing some algebraic operations \cite{AKN2006}. The Lie's theorem \cite{AKN2006,CFGR2015} establishes sufficient conditions for a system defined by a vector differential equation of the form $\dot{x}=v(x)$ to be integrable by quadratures, it reads as follows.

\begin{te}\label{theorem1}
Let $u_{1},\ldots u_{n}$ be linearly independent smooth vector fields on $\mathbb{R}^{n}$. If $u_{1},\ldots,u_{n}$ are symmetries of the vector field $v$ and they generate a solvable Lie algebra with the Lie bracket $[,]$ of vector fields, then the system defined by $\dot{x}=v(x)$ is integrable by quadratures.
\end{te}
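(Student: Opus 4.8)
The plan is to prove the theorem by induction on the dimension $n$, formalizing the reduction procedure already sketched in the text above. The base case $n=1$ is immediate: a single nonvanishing vector field on $\mathbb{R}$ gives a scalar equation $\dot{x}^{1}=v^{1}(x^{1})$, which is solved by separation of variables, i.e. by a single quadrature $\int dx^{1}/v^{1}(x^{1})$ followed by inversion. For the inductive step, I would assume the result holds for solvable Lie algebras acting on $\mathbb{R}^{n-1}$ and deduce it for $\mathbb{R}^{n}$.

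First I would exploit the solvability hypothesis. Since the $u_{1},\ldots,u_{n}$ generate a solvable Lie algebra $\mathfrak{g}$, by definition there is a flag of subalgebras $\{0\}=L_{0}\subset L_{1}\subset\cdots\subset L_{n}=\mathfrak{g}$ with each $L_{i}$ an ideal of codimension $1$ in $L_{i+1}$. In particular $L_{n-1}$ is an ideal of codimension $1$ in $\mathfrak{g}$, and I may choose a generator $u$ of $\mathfrak{g}$ not lying in $L_{n-1}$; because $L_{n-1}$ is an ideal, $[\,\mathfrak{g},L_{n-1}]\subseteq L_{n-1}$ and, more to the point, for any $w\in\mathfrak{g}$ we have $[w,u]\in\mathfrak{g}$ with its $u$-component controlled so that the reduction closes up. I would apply the Straightening Out Theorem to $u$ to obtain coordinates $(y^{1},\ldots,y^{n})$ with $u=\partial/\partial y^{n}$. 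As computed in the text, since each $u_{k}$ is a symmetry of $v$, the transported components $\overline{v}^{i}$ are independent of $y^{n}$, so $v$ projects to a well-defined vector field $\tilde{v}=\overline{v}^{1}\partial_{y^{1}}+\cdots+\overline{v}^{n-1}\partial_{y^{n-1}}$ on the reduced space $\mathbb{R}^{n-1}$ with coordinates $(y^{1},\ldots,y^{n-1})$, and the last equation $\dot{y}^{n}=\overline{v}^{n}(y^{1},\ldots,y^{n-1})$ becomes a pure quadrature once the reduced system is solved.

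The heart of the argument, and the step I expect to be the main obstacle, is showing that the remaining symmetries descend to symmetries of $\tilde{v}$ generating a solvable Lie algebra of dimension $n-1$ on $\mathbb{R}^{n-1}$, so that the inductive hypothesis applies. Concretely I would take representatives $w_{1},\ldots,w_{n-1}$ spanning $L_{n-1}$; the relation $[w,u]\in L_{n-1}$ for $w\in L_{n-1}$ (indeed $[w,u]\in\{0\}$ or a multiple of elements already in $L_{n-1}$, using that $L_{n-1}$ is a subalgebra and $u\notin L_{n-1}$) forces, by the same bracket computation displayed in the excerpt, $\partial w_{k}^{i}/\partial y^{n}=0$ for $i=1,\ldots,n-1$. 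Hence each $w_{k}$ projects to a vector field $\tilde{w}_{k}=w_{k}^{1}\partial_{y^{1}}+\cdots+w_{k}^{n-1}\partial_{y^{n-1}}$ on $\mathbb{R}^{n-1}$. The delicate points are: (i) verifying that this projection is a Lie algebra homomorphism from $L_{n-1}$ onto the $\tilde{w}_{k}$, so that solvability is inherited from $L_{n-1}$ (a codimension-one subalgebra of a solvable algebra is solvable); (ii) checking the $\tilde{w}_{k}$ remain linearly independent, which follows because linear independence over $\mathbb{R}$ of the $u_{k}$ together with $u\notin\operatorname{span}(L_{n-1})$ means no nontrivial combination of the $w_{k}$ lies along $\partial/\partial y^{n}$; and (iii) confirming each $\tilde{w}_{k}$ is a symmetry of $\tilde{v}$, i.e. $[\tilde{w}_{k},\tilde{v}]=0$ on $\mathbb{R}^{n-1}$, which follows by restricting the identity $[w_{k},v]=0$ and discarding the $y^{n}$-direction.

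Once these three points are established, the inductive hypothesis yields a solution of the reduced system $\dot{y}^{i}=\overline{v}^{i}(y^{1},\ldots,y^{n-1})$, $i=1,\ldots,n-1$, by quadratures; integrating $\dot{y}^{n}=\overline{v}^{n}(y^{1},\ldots,y^{n-1})$ against the now-known $y^{1}(t),\ldots,y^{n-1}(t)$ gives $y^{n}$ by one further quadrature; and inverting the Straightening Out change of coordinates, an algebraic operation in the admissible sense, recovers the solution in the original coordinates $x$. This completes the induction and shows $\dot{x}=v(x)$ is integrable by quadratures. I would remark that the only genuinely analytic operations used are the successive integrations and inversions, all of which are admissible quadratures, while everything else is algebraic manipulation and coordinate change.
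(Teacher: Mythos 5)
Your induction breaks down at exactly the step you flag as ``the heart of the argument,'' and the justification you give there is not correct. You claim that, for $w\in L_{n-1}$, the relation $[w,u]\in L_{n-1}$ forces $\partial w^{i}/\partial y^{n}=0$ for $i=1,\ldots,n-1$ ``by the same bracket computation displayed in the excerpt.'' But that computation needs $[w,u]$ to be a multiple of $u$ itself; that is what annihilates the first $n-1$ components. The ideal property gives something different: $[u,w]=\frac{\partial w^{i}}{\partial y^{n}}\partial_{y^{i}}$ equals a \emph{constant-coefficient combination of} $w_{1},\ldots,w_{n-1}$, which by pointwise independence is transverse to $u=\partial_{y^{n}}$ at every point. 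So instead of $\partial w^{i}/\partial y^{n}=0$ you only get a linear ODE system for the components in $y^{n}$; indeed $w$ is projectable along $u$ iff $[u,w]$ is pointwise proportional to $u$, which under your hypotheses forces $[u,w]=0$. A concrete counterexample: on $\mathbb{R}^{2}$ take $u=\partial_{y^{2}}$, $w=e^{y^{2}}\partial_{y^{1}}$, $v=y^{1}\partial_{y^{1}}+\partial_{y^{2}}$. Then $[u,v]=[w,v]=0$, the pair $\{u,w\}$ is pointwise independent and spans the solvable affine algebra $[u,w]=w$ with $L_{1}=\mathbb{R}w$ the codimension-one ideal, yet $w^{1}=e^{y^{2}}$ depends on $y^{2}$: $w$ does not descend to the quotient by the flow of $u$, and your reduction halts at the first step. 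Note also that your orientation of the flag is inverted relative to the paper's own sketch: the condition $[w,u]=\lambda u$ used there means the \emph{straightened} field spans an invariant line of the adjoint action (in the example one must straighten $w$, and then $u$ does descend). But even with the correct orientation the induction cannot be made general over $\mathbb{R}$: a real solvable Lie algebra need not contain any one-dimensional ideal of the whole algebra --- e.g. $\mathfrak{e}(2)$, with $[r,e_{1}]=e_{2}$, $[r,e_{2}]=-e_{1}$, $[e_{1},e_{2}]=0$, realized by pointwise independent fields on $\mathbb{R}^{3}$ --- and then no choice of field to straighten lets the remaining symmetries project.

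For comparison: the paper does not prove Theorem \ref{theorem1} at all; it quotes it as Lie's theorem from the literature, and the two-step reduction preceding it is only a heuristic valid when $[w,u]=\lambda u$ happens to hold at each stage. The proofs in the cited references avoid the quotient construction precisely because of the obstruction above. The standard argument runs as follows: take the codimension-one ideal $\mathfrak{h}=L_{n-1}\supseteq[\mathfrak{g},\mathfrak{g}]$ and the coframe $\alpha^{1},\ldots,\alpha^{n}$ dual to $u_{1},\ldots,u_{n}$; let $\alpha$ be the constant-coefficient combination annihilating $\mathfrak{h}$ with $\alpha(u)=1$. From $d\alpha(X,Y)=X\alpha(Y)-Y\alpha(X)-\alpha([X,Y])$ and the fact that all brackets lie in $\mathfrak{h}$, one gets $d\alpha=0$ and $u_{i}\alpha(v)=0$ for all $i$, hence $\alpha(v)$ is a constant $c$. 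Integrating the closed form $\alpha=dF$ is a quadrature and produces a function with $\dot{F}=c$ along the flow of $v$; the fields of $\mathfrak{h}$ are tangent to the level sets of $F$, and one iterates on the (in general non-autonomous) reduced system with the solvable algebra $\mathfrak{h}$. If you want to repair your write-up, replace your descent step by this coframe argument; your base case, the final quadrature, and the bookkeeping of admissible operations can be kept as they are.
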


For our purposes we need to consider smooth vector fields $v$ on $\mathbb{R}^{n}\times\mathbb{R}$ of the form $v(x,t)=v^{1}(x,t)\frac{\partial}{\partial x^{1}}+\cdots+v^{n}(x,t)\frac{\partial}{\partial x^{n}}+\frac{\partial}{\partial t}$, with $t$ the coordinate in $\mathbb{R}$. If we consider the dynamical system on $\mathbb{R}^{n}\times\mathbb{R}$ defined by the vector field $v$, then the equations of motion are 
\begin{equation}\label{eqt}
\left\lbrace \begin{array}{c}
\dot{x}^{1}=v^{1}(x^{1},\ldots,x^{n},t), \\
\vdots \\
\dot{x}^{n}=v^{n}(x^{1},\ldots,x^{n},t),\\
\dot{t}=1
\end{array} \right.
\end{equation}
By integrating the last equation we have the solution for the variable $t$, then we only have to solve a system of $n$ differential equations, so to apply the reduction procedure described previously we only need $n$ independent symmetries of $v$ that generate a solvable Lie algebra with the Lie bracket of vector fields.

\section{Lie algebras of symmetries for symplectic and cosymplectic Hamiltonian systems}
\label{sec4}

In \cite{AKN2006,Kozlov96} it is showed for autonomous and nonautonomous Hamiltonian systems under the phase space $\mathbb{R}^{2n}$ and $\mathbb{R}^{2n}\times\mathbb{R}$ respectively, by using the results of Lie, that having a solvable Lie algebra of constants of motion of a Hamiltonian system allows us to find solutions of the Hamilton equations of motion by quadratures; in \cite{Azuaje2022} this result is extended to time-dependent Hamiltonian systems under the phase space $M\times\mathbb{R}$ with $M$ a symplectic manifold. We show this result for time-independent Hamiltonian systems under symplectic manifolds and for time-dependent Hamiltonian systems under cosymplectic manifolds. For the first case we have the following theorem.

\begin{te}\label{theorem2}
Let $(M,\omega,H)$ be a time-independent Hamiltonian system with $(M,\omega)$ a symplectic manifold of dimension $2n$. If we have $n$ constants of motion $f_{1},\ldots,f_{n}$ such that
\begin{enumerate}
\item $\lbrace f_{i},f_{j}\rbrace =c_{ij}^{k}f_{k}$, with $c_{ij}^{k}\in\mathbb{R}$,
\item the Lie algebra generated by $f_{1},\ldots,f_{n}$ is a solvable Lie algebra with the Lie bracket defined by the Poisson bracket,
\item on the set $M_{f}=\lbrace x\in M : f_{i}(x)=\alpha_{i}, \alpha_{i}\in \mathbb{R}\rbrace$ the functions $f_{1},\ldots,f_{n}$ are functionally independent and, 
\item $c_{ij}^{k}\alpha_{k}=0$ for $i,j=1,\ldots,n$,
\end{enumerate}
then $M_{f}$ is a smooth submanifold of $M$ of dimension $n$, and the solutions of the Hamilton equations that live in $M_{f}$ can be found by quadratures.
\end{te}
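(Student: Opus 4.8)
The plan is to realize $M_f$ as a level set, show that the Hamiltonian vector fields $X_{f_1},\ldots,X_{f_n}$ restrict to $n$ pointwise-independent symmetries of $X_H$ on $M_f$ that generate a solvable Lie algebra, and then invoke Theorem \ref{theorem1}. I would carry this out in four steps.

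First I would establish the manifold structure. Consider the map $F=(f_1,\ldots,f_n):M\to\mathbb{R}^n$, so that $M_f=F^{-1}(\alpha)$ with $\alpha=(\alpha_1,\ldots,\alpha_n)$. By hypothesis (iii) the differentials $df_1,\ldots,df_n$ are linearly independent at every point of $M_f$, hence $\alpha$ is a regular value of $F$ along $M_f$, and the regular value (preimage) theorem gives that $M_f$ is an embedded submanifold of dimension $2n-n=n$.

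Next I would verify tangency to $M_f$. A vector field $X$ is tangent to $M_f$ exactly when $Xf_j=0$ on $M_f$ for all $j$. Since the $f_i$ are constants of motion, $X_Hf_j=\lbrace f_j,H\rbrace=0$, so $X_H$ is tangent to $M_f$ and its integral curves issuing from $M_f$ stay there; these are precisely the solutions that live in $M_f$. For the symmetries I would use $X_{f_i}f_j=\lbrace f_j,f_i\rbrace=c_{ji}^{k}f_k$ and evaluate on $M_f$, where $f_k=\alpha_k$; hypothesis (iv) then yields $X_{f_i}f_j=c_{ji}^{k}\alpha_k=0$, so each $X_{f_i}$ is tangent to $M_f$. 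This tangency is the crucial structural point, and it is the step I expect to be the main obstacle: without condition (iv) the $X_{f_i}$ need not preserve $M_f$ and the whole reduction collapses.

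Finally I would transfer the algebra and apply Lie's theorem. In Darboux coordinates one checks directly that $f\mapsto X_f$ is a Lie algebra anti-homomorphism, $[X_f,X_g]=-X_{\lbrace f,g\rbrace}$. Hence $[X_{f_i},X_H]=-X_{\lbrace f_i,H\rbrace}=0$, so the $X_{f_i}$ are symmetries of $X_H$, and $[X_{f_i},X_{f_j}]=-X_{\lbrace f_i,f_j\rbrace}=-c_{ij}^{k}X_{f_k}$ by linearity, so the $X_{f_i}$ span a Lie algebra whose structure constants are $-c_{ij}^{k}$; since solvability passes to homomorphic images, hypothesis (ii) makes this algebra solvable. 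For independence I would note $X_{f_i}\lrcorner\omega=df_i$ with $\omega$ nondegenerate and the $df_i$ independent on $M_f$, so the $X_{f_i}$ are pointwise linearly independent in $TM$ along $M_f$; being tangent to the $n$-dimensional $M_f$, they form a pointwise frame of $TM_f$. Restricting $X_H$ and the $X_{f_i}$ to $M_f$ and working in a chart identifying $M_f$ with an open set of $\mathbb{R}^n$, I obtain $n$ linearly independent symmetries of the dynamical vector field $X_H|_{M_f}$ generating a solvable Lie algebra, so Theorem \ref{theorem1} gives integrability by quadratures of the restricted equations of motion, i.e.\ of the solutions living in $M_f$.
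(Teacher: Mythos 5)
Your proposal is correct and follows essentially the same route as the paper's proof: regular level set theorem for the manifold structure, tangency of $X_H$ and the $X_{f_i}$ to $M_f$, the anti-homomorphism $[X_f,X_g]=-X_{\lbrace f,g\rbrace}$ to convert the solvable algebra of constants of motion into a solvable algebra of symmetries, and then Theorem \ref{theorem1}. In fact you are somewhat more careful than the paper at two points it leaves implicit: you spell out that tangency of the $X_{f_i}$ requires hypotheses (i) and (iv), and you verify the pointwise linear independence of the $X_{f_i}$ (via $X_{f_i}\lrcorner\omega=df_i$ and nondegeneracy of $\omega$) that Theorem \ref{theorem1} demands.
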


Before showing the proof of this theorem, we present a brief review of the concept of functional independence and its consequences. Consider a smooth manifold $N$, we say that the functions $f_{1},f_{2},\cdots,f_{k}$ with $k\leq dim(N)$ are functionally independent at the point $p\in N$ if the differential maps $df_{1}|_{p},df_{2}|_{p},\ldots,df_{k}|_{p}$ are linearly independent; this means that $p$ is a regular point of the differentiable function $F:N\longrightarrow \mathbb{R}^{k}$ defined by $F=(f_{1},f_{2},\ldots,f_{k})$. Consider the set $N_{f}=\lbrace x\in N: f_{i}(x)=\alpha_{i}, \alpha_{i}\in \mathbb{R}\rbrace$, and suppose that $f_{1},f_{2},\ldots,f_{k}$ are functionally independent, from the Regular Level Set Theorem \cite{Lee2012} we have that $N_{f}$ is a smooth submanifold of $N$ of dimension $dim(N)-k$.

Proof: From the functional independence of the functions $f_{1},\ldots,f_{n}$ on $M_{f}$, we have that $M_{f}$ is a smooth submanifold of $M$ of dimension $2n-n=n$. We know that the assignment $f\mapsto X_{f}$ is a Lie algebra antihomomorphism between the Lie algebras $(C^{\infty}(M),\lbrace,\rbrace)$ and $(\mathfrak{X}(M),[,])$ \cite{Lee2012,BG2021}, i.e. $X_{\lbrace f,g\rbrace}=-[X_{f},X_{g}]$, so since the functions $f_{1},\ldots,f_{n}$ are constants of motion, the vector fields $X_{f_{1}},\cdots,X_{f_{n}}$ are symmetries of the Hamiltonian vector field $X_{H}$ and generate a solvable Lie algebra with the Lie bracket of vector fields. In addition we have that the vector fields $X_{f_{1}},\ldots,X_{f_{n}}$ are tangent to the submanifold $M_{f}$ because $X_{f_{i}}f_{j}=\lbrace f_{j},f_{i}\rbrace$ which is zero on $M_{f}$, also the Hamiltonian vector field $X_{H}$ is tangent to the submanifold $M_{f}$ since $X_{H}f_{i}=\lbrace f_{i},H\rbrace=0$ for $i=1,\ldots,n$. Then the solutions of the Hamilton equations that live on $M_{f}$ are solutions of the vector differential equation $\dot{x}=X_{H}(x)$ with $(x^{1},\ldots,x^{n})$ local coordinates on $M_{f}$.  Since the vector fields $X_{f_{1}},\ldots,X_{f_{n}}$ generate a solvable Lie algebra of symmetries of the vector field $X_{H}$, then the system defined by the equation $\dot{x}=X_{H}(x)$ is solvable by quadratures (Theorem \ref{theorem1}), that is, the solutions of the Hamilton equations that live in $M_{f}$ can be found by quadratures.
\rule{5pt}{5pt}

In the cosymplectic framework we have an analogous result.

\begin{te}\label{theorem3}
Let $(M,\Omega,\eta,H)$ be a time-dependent Hamiltonian system with $(M,\Omega,\eta)$ a cosymplectic manifold of dimension $2n+1$. If we have $n$ constants of motion $f_{1},\ldots,f_{n}$ such that
\begin{enumerate}
\item $\lbrace f_{i},f_{j}\rbrace =c_{ij}^{k}f_{k}$, with $c_{ij}^{k}\in\mathbb{R}$,
\item the Lie algebra generated by $f_{1},\ldots,f_{n}$ is a solvable Lie algebra with the Lie bracket defined by the Poisson bracket,
\item on the set $M_{f}=\lbrace x\in M : f_{i}(x)=\alpha_{i}, \alpha_{i}\in \mathbb{R}\rbrace$ the functions $f_{1},\ldots,f_{n}$ are functionally independent and, 
\item $c_{ij}^{k}\alpha_{k}=0$ for $i,j=1,\ldots,n$,
\end{enumerate}
then $M_{f}$ is a smooth submanifold of $M$ of dimension $n+1$, and the solutions of the Hamilton equations that live in $M_{f}$ can be found by quadratures.
\end{te}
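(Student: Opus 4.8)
The plan is to follow the template of the proof of Theorem \ref{theorem2}, replacing the Hamiltonian vector field $X_{H}$ by the evolution vector field $E_{H}=X_{H}+R$ as the dynamical vector field and accounting for the extra time coordinate. First I would invoke the functional independence of $f_{1},\ldots,f_{n}$ on $M_{f}$ together with the Regular Level Set Theorem to conclude that $M_{f}$ is a smooth submanifold of $M$ of dimension $(2n+1)-n=n+1$, as claimed.

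Next I would show that $X_{f_{1}},\ldots,X_{f_{n}}$ are symmetries of $E_{H}$ and generate a solvable Lie algebra. The bracket relations $[X_{f_{i}},X_{f_{j}}]=-c_{ij}^{k}X_{f_{k}}$ and the solvability of the algebra they span follow exactly as in Theorem \ref{theorem2} from hypotheses (i) and (ii) and the antihomomorphism $X_{\lbrace f,g\rbrace}=-[X_{f},X_{g}]$, which continues to hold in the cosymplectic case by the same coordinate computation, since $X_{f}$ has the identical expression in Darboux coordinates. The genuinely new point is that the symmetry condition is now with respect to $E_{H}$ rather than $X_{H}$: writing $[X_{f_{i}},E_{H}]=[X_{f_{i}},X_{H}]+[X_{f_{i}},R]$, the first term equals $-X_{\lbrace f_{i},H\rbrace}$, while a short computation using $R=\partial/\partial t$ and the fact that $X_{f_{i}}$ carries only $q,p$-derivatives gives $[X_{f_{i}},R]=-X_{Rf_{i}}$. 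Hence $[X_{f_{i}},E_{H}]=-X_{\lbrace f_{i},H\rbrace+Rf_{i}}=-X_{E_{H}f_{i}}$, and since each $f_{i}$ is a constant of motion we have $E_{H}f_{i}=0$, so $[X_{f_{i}},E_{H}]=0$.

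Then I would verify tangency and apply the reduction. Each $X_{f_{i}}$ is tangent to $M_{f}$ because $X_{f_{i}}f_{j}=\lbrace f_{j},f_{i}\rbrace=c_{ji}^{k}f_{k}$ vanishes on $M_{f}$ by hypothesis (iv), and $E_{H}$ is tangent because $E_{H}f_{i}=0$; thus all these fields restrict to $M_{f}$ and the brackets are preserved by restriction. Taking $t$ as one of the coordinates on $M_{f}$ (legitimate because $\eta=dt$ and $\eta(E_{H})=1$, so $dt$ is nonzero on $TM_{f}$), the evolution field takes the form $E_{H}=(\cdots)\partial_{x^{a}}+\partial_{t}$ treated at the end of Section \ref{sec3}, for which only $n$ independent symmetries are required. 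These are supplied by $X_{f_{1}},\ldots,X_{f_{n}}$, which together with $E_{H}$ span the $(n+1)$-dimensional $TM_{f}$, so Theorem \ref{theorem1} in that adapted form yields integrability by quadratures.

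The main obstacle is precisely the linear independence of the $X_{f_{i}}$ on $M_{f}$: unlike the symplectic case, functional independence of the $f_{i}$ (independence of $df_{1},\ldots,df_{n}$) does not by itself force the $X_{f_{i}}$ to be independent, because the assignment $f\mapsto X_{f}$ ignores the $dt$-component of $df$ (for instance $X_{t}=0$ although $dt\neq 0$). I would resolve this by upgrading functional independence to linear independence of $\lbrace df_{1},\ldots,df_{n},\eta\rbrace$ on $M_{f}$: since $df_{i}(E_{H})=E_{H}f_{i}=0$ while $\eta(E_{H})=1$, the form $\eta$ cannot be a linear combination of the $df_{i}$ at any point of $M_{f}$. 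Consequently any relation $a^{i}X_{f_{i}}=0$, contracted with $\Omega$, would give $a^{i}df_{i}=(a^{i}Rf_{i})\eta$, a nontrivial dependence among $\lbrace df_{i},\eta\rbrace$ unless all $a^{i}=0$; this both secures the independence of the symmetries and confirms that $E_{H}$, lying outside $\ker\eta$, is transverse to their span.
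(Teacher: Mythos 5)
Your proposal is correct, and its skeleton is the same as the paper's: show $M_{f}$ is an $(n+1)$-dimensional submanifold, prove each $X_{f_{i}}$ commutes with $E_{H}$ via the antihomomorphism $X_{\lbrace f,g\rbrace}=-[X_{f},X_{g}]$ together with $X_{Rf}=-[X_{f},R]$, check tangency of the $X_{f_{i}}$ and $E_{H}$ to $M_{f}$, and apply Theorem \ref{theorem1} in the adapted form for fields $v=v^{a}\partial_{x^{a}}+\partial_{t}$ on $\mathbb{R}^{n}\times\mathbb{R}$ discussed at the end of Section \ref{sec3}. However, three of your sub-arguments genuinely differ, and two improve on the paper. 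First, where the paper proves Lemma \ref{lem1} by an implicit-function-theorem argument (the $t$-column of the Jacobian of $(f_{1},\ldots,f_{n})$ is a combination of the other columns because each $f_{i}$ is a constant of motion, so one solves for $n$ of the $(q,p)$-variables in terms of the rest and $t$), you obtain the time coordinate on $M_{f}$ directly from tangency: $E_{H}$ is tangent to $M_{f}$ and $\eta(E_{H})=dt(E_{H})=1$, so $dt$ restricts to a nowhere-zero form on $M_{f}$ and $t$ completes to a chart; this is shorter and equally rigorous, though less constructive. Second, you prove $[X_{f},R]=-X_{Rf}$ by the Darboux-coordinate computation (the components of $X_{f}$ involve only $q,p$-derivatives, which commute with $\partial_{t}$), whereas the paper contracts $[X_{f},R]$ with $\Omega$ and $\eta$ invariantly; both are valid. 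Third, and most substantively, you verify the pointwise linear independence of $X_{f_{1}},\ldots,X_{f_{n}}$ along $M_{f}$, which is an explicit hypothesis of Theorem \ref{theorem1} but is nowhere checked in the paper's proof; in the cosymplectic setting it is not automatic, since $f\mapsto X_{f}$ annihilates functions of $t$ alone (e.g.\ $X_{t}=0$ while $dt\neq 0$). Your argument — a relation $a^{i}X_{f_{i}}=0$ contracted with $\Omega$ yields $a^{i}df_{i}=(a^{i}Rf_{i})\eta$, while $\lbrace df_{1},\ldots,df_{n},\eta\rbrace$ is independent along $M_{f}$ because $df_{i}(E_{H})=0$ and $\eta(E_{H})=1$ — closes this gap cleanly, so your proof is in this respect more complete than the one in the paper.
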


Our procedure is analogous to the one presented in \cite{Azuaje2022}. First, we have an analogous to Lemma 1 in \cite{Azuaje2022} that is fundamental for the proof of theorem \ref{theorem3}.

\begin{lem}\label{lem1}
If $f_{1},f_{2},\ldots,f_{k}:M \longrightarrow \mathbb{R}$ are functionally independent constants of motion, then around any point in $M_{f}$ defined as in Theorem \ref{theorem3}, we can find local coordinates $(y^{1},\cdots,y^{2n-k},t)$.
\end{lem}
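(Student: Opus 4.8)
The plan is to exhibit the time function $t$ (the Darboux coordinate singled out by $\eta=dt$) as one member of a coordinate system on $M_{f}$, and to obtain the remaining $2n-k$ coordinates by completion. Since $f_{1},\ldots,f_{k}$ are functionally independent along $M_{f}$, the Regular Level Set Theorem already guarantees that $M_{f}$ is an embedded submanifold of $M$ of dimension $(2n+1)-k=(2n-k)+1$, so the asserted number of coordinates is correct. Thus everything reduces to showing that the restriction $t|_{M_{f}}$ has nowhere-vanishing differential; once this is established, $t|_{M_{f}}$ is a submersion near the chosen point and can be completed to a chart.

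The key step is to produce a tangent vector to $M_{f}$ on which $dt$ does not vanish. Here I would use the evolution vector field $E_{H}=X_{H}+R$. Because the $f_{i}$ are constants of motion, $E_{H}f_{i}=0$ for every $i$, so $E_{H}$ is everywhere tangent to the level set $M_{f}=\lbrace f_{i}=\alpha_{i}\rbrace$. On the other hand, from the defining relations $X_{H}\lrcorner\eta=0$ and $R\lrcorner\eta=1$ together with $\eta=dt$ one computes
\[
dt(E_{H})=\eta(E_{H})=\eta(X_{H})+\eta(R)=0+1=1.
\]
Hence at each point $p\in M_{f}$ the vector $E_{H}(p)$ lies in $T_{p}M_{f}$ and satisfies $dt_{p}(E_{H}(p))=1\neq 0$, which forces $dt_{p}|_{T_{p}M_{f}}\neq 0$. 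Equivalently, $dt_{p}$ is not a linear combination of $df_{1}|_{p},\ldots,df_{k}|_{p}$, so $t$ is functionally independent of $f_{1},\ldots,f_{k}$ along $M_{f}$.

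With $d(t|_{M_{f}})\neq 0$ in hand, I would finish by invoking the standard fact that a real-valued function on a manifold with nonvanishing differential can be completed to a local coordinate system: choose $y^{1},\ldots,y^{2n-k}$ whose differentials, together with $d(t|_{M_{f}})$, span $T_{p}^{*}M_{f}$, yielding a chart $(y^{1},\ldots,y^{2n-k},t)$ around $p$. The only genuine obstacle is this middle step, namely ruling out that $dt$ collapses into the span of the $df_{i}$ on $M_{f}$, which would make $t$ constant on the level set and useless as a coordinate. The observation that $E_{H}$ is simultaneously tangent to $M_{f}$ and transverse to the level sets of $t$ (being the lift with unit $t$-speed, $\dot{t}=1$) is exactly what removes this obstacle, and it is also the property that will later let us cast the reduced dynamics on $M_{f}$ in the form $\dot{y}=v(y,t)$, $\dot{t}=1$ required to apply Theorem \ref{theorem1}.
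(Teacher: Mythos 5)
Your proof is correct, but it reaches the conclusion by a genuinely different route than the paper. The paper works entirely in ambient Darboux coordinates: it observes that, because each $f_{i}$ satisfies $\lbrace f_{i},H\rbrace+\partial f_{i}/\partial t=0$, the $t$-column of the Jacobian of $F=(f_{1},\ldots,f_{k})$ is a linear combination of the $2n$ columns corresponding to $(q,p)$; hence the $k$ independent columns can be chosen among those, and the implicit function theorem lets one solve for $k$ of the $(q,p)$-coordinates as $x=g(y,t)$, so that $M_{f}$ is exhibited locally as a graph and the chart is $(y^{1},\ldots,y^{2n-k},t)\mapsto\phi(y,g(y,t),t)$. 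You instead argue intrinsically: the constant-of-motion condition makes $E_{H}=X_{H}+R$ tangent to $M_{f}$, while $\eta(X_{H})=0$ and $\eta(R)=1$ give $dt(E_{H})=1$, so $dt$ cannot vanish on $T_{p}M_{f}$ and $t|_{M_{f}}$ can be completed to a chart. These are dual formulations of the same underlying fact --- your tangent vector $E_{H}$ is exactly the witness that the paper's $t$-column lies in the span of the others --- but the implementations differ in what they deliver. The paper's version gives slightly more concrete information: its transverse coordinates $y^{i}$ are restrictions of ambient Darboux coordinates and the level set comes with an explicit graph parametrization. Your version is coordinate-free at the key step, isolates cleanly the one thing that could fail (that $dt_{p}$ fall into the span of $df_{1}|_{p},\ldots,df_{k}|_{p}$), and has the added benefit of showing at once that in the resulting chart the restricted dynamics has the form $\dot{y}=v(y,t)$, $\dot{t}=1$, which is precisely what is needed to invoke Theorem \ref{theorem1} in the time-dependent form discussed at the end of Section \ref{sec3}.
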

The proof is analogous to the presented in \cite{Azuaje2022}, there are only some subtle but essential differences.

Proof: Since the functions $f_{1},f_{2},\ldots,f_{k}$ are functionally independent, from the Regular Level set Theorem \cite{Lee2012} we have that $M_{f}$ is a smooth submanifold of $M$ of dimension $2n+1-k$.

Consider $p\in M_{f}$. Let $(U,\phi)$ be a smooth chart on $M$ around $p$, with $U$ an open subset of $\mathbb{R}^{2n+1}$. We have that the the total derivative of $f$ at $p$ is a matrix with $k$ linearly independent columns, those columns are some of the first $2n$ ones since the last column of the matrix is a linear combination of the others. Let $x^{1},\ldots,x^{k}\in \lbrace q^{1},\ldots,q^{n},p_{1},\ldots,p_{n}\rbrace$ be such that the columns
\begin{equation}
\left( \begin{array}{c}
\frac{\partial f_{1}}{\partial x^{1}}(p) \\ 
\vdots \\ 
\frac{\partial f_{k}}{\partial x^{1}}(p)
\end{array} \right), \cdots ,\left( \begin{array}{c}
\frac{\partial f_{1}}{\partial x^{k}}(p) \\ 
\vdots \\ 
\frac{\partial f_{k}}{\partial x^{k}}(p)
\end{array} \right),
\end{equation}
 are linearly independent, so the matrix
\begin{equation}
\left( \begin{array}{ccc}
\frac{\partial f_{1}}{\partial x^{1}}(p) & \cdots & \frac{\partial f_{1}}{\partial x^{k}}(p) \\ 
\vdots &  & \vdots \\ 
\frac{\partial f_{k}}{\partial x^{1}}(p) & \cdots & \frac{\partial f_{k}}{\partial x^{k}}(p)
\end{array} \right)
\end{equation}
is no singular, then there exist an open subset $U_{0}\subseteq \mathbb{R}^{2n+1-k}$, an open subset $W_{0}\subseteq \mathbb{R}^{k}$ and a differentiable function $g:U_{0}\longrightarrow W_{0}$ such that
\begin{equation}
\phi^{-1}(p)\in U_{0}\times W_{0}\subseteq U,
\end{equation}
and on $M_{f}\cap \phi (U_{0}\times W_{0})$ we have that $x=g(y,t)$, where $x=(x^{1},\ldots,x^{k})$ and $y=(y^{1},\ldots,y^{2n-k})$ are such that $(y^{1},\ldots,y^{2n-k},x^{1},\ldots,x^{k})=(q^{1},\ldots,q^{n},p_{1},\ldots,p_{n})$. So we obtain a smooth coordinate chart $(U_{0},\varphi)$ on $M_{f}$ with $p\in \varphi (U_{0})$ and $\varphi$ defined by
\begin{equation}
\varphi(y^{1},\ldots,y^{2n-k},t)=\phi(y^{1},\ldots,y^{2n-k},g(y^{1},\ldots,y^{2n-k},t),t).
\end{equation}
\rule{5pt}{5pt}

As in the proof of theorem \ref{theorem2}, in the cosymplectic framework we have that constants of motion of the system are related to symmetries of  $E_{H}$ in the following sense, if $f$ is a constant of motion then the Hamiltonian vector field $X_{f}$ for $f$ is a symmetry of $E_{H}$, indeed, let us see that the assignment $f\mapsto X_{f}$ is a Lie algebra antihomomorphism between the Lie algebras $(C^{\infty}(M),\lbrace,\rbrace)$ and $(\mathfrak{X}(M),[,])$, i.e. $X_{\lbrace f,g\rbrace}=-[X_{f},X_{g}]$; in addition we have $X_{Rf}=-[X_{f},R]$. Indeed, given $f,g,h\in C^{\infty}(M)$ we have
\begin{equation}
\begin{split}
-[X_{f},X_{g}]h&=-X_{f}X_{g}h+X_{g}X_{f}h\\
&=-X_{f}\lbrace h,g\rbrace+X_{g}\lbrace h,f\rbrace\\
&=-\lbrace \lbrace h,g\rbrace,f\rbrace+\lbrace \lbrace h,f\rbrace,g\rbrace\\
&=\lbrace \lbrace g,h\rbrace,f\rbrace+\lbrace \lbrace h,f\rbrace,g\rbrace\\
&=-\lbrace \lbrace f,g\rbrace, h\rbrace \hspace{1cm}(\textit{Jacoby identity})\\
&=X_{\lbrace f,g\rbrace}h.
\end{split}
\end{equation}
On the other hand
\begin{equation}
-[X_{f},R]\lrcorner\eta=\eta([R,X_{f}])=R(\eta(X_{f}))-X_{f}(\eta(R))=-X_{f}(1)=0
\end{equation}
and for all $X\in\mathfrak{X}(M)$ we have
\begin{equation}
\begin{split}
(-[X_{f},R]\lrcorner\Omega)(X)&=\Omega([R,X_{f}],X)\\
&=X\Omega(R,X_{f})-R\Omega(X,X_{f})+X_{f}\Omega(X,R)-\Omega([X,R],X_{f})+\Omega([X,X_{f}],R)\\
&=R\Omega(X_{f},X)+\Omega(X_{f},[X,R])\\
&=R(df(X)-(Rf)\eta(X))+df([X,R])-(Rf)\eta([X,R])\\
&=R(df(X))-(RRf)\eta(X)-(Rf)R(\eta(X))+X(df(R))-R(df(X))\\
&-(Rf)(X(\eta(R))-R(\eta(X)))\\
&=-RRf\eta(X)+X(R(f))\\
&=d(Rf)(X)-(RRf)\eta(X),
\end{split}
\end{equation}
therefore $-[X_{f},R]\lrcorner\Omega=d(Rf)-(RRf)\eta$, so that $X_{Rf}=-[X_{f},R]$.

Now we know that
\begin{equation}
[E_{H},X_{f}]=[X_{H}+R,X_{f}]=[X_{H},X_{f}]+[R,X_{f}],
\end{equation}
so that
\begin{equation}
\begin{split}
[E_{H},X_{f}]&=X_{\lbrace f,H\rbrace}+X_{\frac{\partial f}{\partial t}}\\
&= X_{\lbrace f,H\rbrace+\frac{\partial f}{\partial t}}\\
&=0.
\end{split}
\end{equation}

Now the proof of theorem \ref{theorem3} follows from theorem \ref{theorem1} and lemma \ref{lem1}.

Proof: From Lemma \ref{lem1} we have that the set $M_{f}$ is a smooth submanifold of $M$ of dimension $n+1$ and  around any point in $M_{f}$ we can find local coordinates $(y^{1},\ldots,y^{n},t)$. An essential observation is that the vector fields $X_{f_{1}},\ldots,X_{f_{n}}$ are tangent to the submanifold $M_{f}$ because $X_{f_{i}}(f_{j})=\lbrace f_{j},f_{i}\rbrace$ which is zero on $M_{f}$, also the evolution vector field $E_{H}$ is tangent to the submanifold $M_{f}$ since $E_{H}f_{i}=\lbrace f_{i},H\rbrace+\frac{\partial f_{i}}{\partial t}=0$ for $i=1,\ldots,n$. Then the solutions of the Hamilton equations that live in $M_{f}$ are solutions of the vector differential equation $\dot{y}=E_{H}(y,t)$, so that the solutions of the Hamilton equations that live in $M_{f}$ can be found by quadratures if the vector differential equation $\dot{y}=E_{H}(y,t)$ can be solved by quadratures; in fact we have that the equation $\dot{y}=E_{H}(y,t)$  satisfies the hypothesis of theorem \ref{theorem1}, indeed, the vector fields $X_{f_{1}},\ldots,X_{f_{n}}$ generate a solvable Lie algebra of symmetries of the vector field $E_{H}$.
\rule{5pt}{5pt}

It has been already mentioned that every Abelian Lie algebra is trivially a solvable Lie algebra, so every integrable system satisfies the hypothesis of the previous theorems presented in this section; for example for the Harmonic oscillator it is well known the existence of independent constants of motion that are pairwise in involution, so they form an Abelian Lie algebra over the Poisson bracket which is trivially a solvable Lie algebra, i.e., for the Harmonic oscillator we can apply theorem \ref{theorem2} or theorem \ref{theorem3} depending on whether the Hamiltonian function is time-independent or time-dependent. In \cite{Prykarpatsky99} we can find an example of a nonabelian Lie algebra of constants of motion for the motion of three particles on line $\mathbb{R}$ under a uniform potential field.

\section{Lie algebras of symmetries for contact and cocontact Hamiltonian systems}
\label{sec5}

The notion of integrability for contact Hamiltonian systems is a bit different from the one for symplectic or cosymplectic Hamiltonian systems; we know that there are important geometric differences between the phase spaces of contact Hamiltonian systems and symplectic or cosymplectic Hamiltonian systems, namely, symplectic and cosymplectic manifolds are Poisson manifolds, but a contact manifold is strictly a Jacobi manifold. The following definition is presented in \cite{Boyer2011,Visinescu2017}.
\begin{de}
We say that a contact Hamiltonian system $(M,\theta,H)$ is good if $H$ is constant along the flow of the Reeb vector field $R$ ($RH=0$), or equivalently, $H$ is a constant of motion. 
\end{de}

\begin{te}\label{theorem4}
Let $(M,\theta,H)$ be a good contact Hamiltonian system with $(M,\theta)$ a contact manifold of dimension $2n+1$. If we have $n$ constants of motion $f_{1},\ldots,f_{n}$ with $Rf=0$ (the Hamiltonian $H$ might be one of them) such that
\begin{enumerate}
\item $\lbrace f_{i},f_{j}\rbrace =c_{ij}^{k}f_{k}$, with $c_{ij}^{k}\in\mathbb{R}$,
\item the Lie algebra generated by $f_{1},\ldots,f_{n}$ is a solvable Lie algebra with the Lie bracket defined by the Jacobi bracket,
\item on the set $M_{f}=\lbrace x\in M : f_{i}(x)=\alpha_{i}, \alpha_{i}\in \mathbb{R}\rbrace$ the functions $f_{1},\ldots,f_{n}$ are functionally independent and, 
\item $c_{ij}^{k}\alpha_{k}=0$ for $i,j=1,\ldots,n$,
\end{enumerate}
then $M_{f}$ is a smooth submanifold of $M$ of dimension $n+1$, and the solutions of the Hamilton equations that live in $M_{f}$ can be found by quadratures.
\end{te}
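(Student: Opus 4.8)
The plan is to follow the same strategy as in the proofs of Theorems \ref{theorem2} and \ref{theorem3}: realize $M_{f}$ as a submanifold to which the dynamics restricts, exhibit on $M_{f}$ a full-rank solvable Lie algebra of symmetries of the dynamical vector field, and invoke Theorem \ref{theorem1}. Since $\dim M=2n+1$ and $f_{1},\ldots,f_{n}$ are functionally independent on $M_{f}$, the Regular Level Set Theorem gives at once that $M_{f}$ is a smooth submanifold of dimension $2n+1-n=n+1$. The dynamics is now carried by the contact Hamiltonian vector field $X_{H}$ (not by an evolution field $E_{H}$), and because each $f_{i}$ is a constant of motion, $X_{H}f_{i}=\lbrace f_{i},H\rbrace-f_{i}RH=0$, so $X_{H}$ is tangent to every level set $f_{i}=\alpha_{i}$ and hence to $M_{f}$.

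For the symmetries I would first record the Jacobi analogue of the antihomomorphism already used in the Poisson case: every $X_{f}$ is a contact vector field ($\mathcal{L}_{X_{f}}\theta=-(Rf)\theta$), the contact vector fields form a Lie subalgebra of $\mathfrak{X}(M)$ in linear bijection with $C^{\infty}(M)$ via $f=-\theta(X_{f})$, so $[X_{f},X_{g}]$ is again of the form $X_{h}$ with $h=-\theta([X_{f},X_{g}])=-\lbrace f,g\rbrace$; hence $X_{\lbrace f,g\rbrace}=-[X_{f},X_{g}]$. Goodness enters here: constancy of $f_{i}$ together with $RH=0$ gives $\lbrace f_{i},H\rbrace=0$, so $[X_{H},X_{f_{i}}]=-X_{\lbrace H,f_{i}\rbrace}=0$ and each $X_{f_{i}}$ is a symmetry of $X_{H}$, while the antihomomorphism transports solvability of $\langle f_{1},\ldots,f_{n}\rangle$ to solvability of $\langle X_{f_{1}},\ldots,X_{f_{n}}\rangle$. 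Tangency of $X_{f_{i}}$ to $M_{f}$ uses the standing hypothesis $Rf_{i}=0$: indeed $X_{f_{i}}f_{j}=\lbrace f_{j},f_{i}\rbrace-f_{j}Rf_{i}=\lbrace f_{j},f_{i}\rbrace=c_{ji}^{k}f_{k}$, which equals $c_{ji}^{k}\alpha_{k}=0$ on $M_{f}$ by hypothesis $4$.

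This produces only $n$ symmetries on the $(n+1)$-dimensional $M_{f}$, one short of what Theorem \ref{theorem1} demands, and supplying the missing symmetry is the heart of the argument. The observation I would use is that the Reeb field is itself contact Hamiltonian, $R=X_{-1}$, so the same antihomomorphism yields $[X_{H},R]=-X_{\lbrace H,-1\rbrace}=-X_{RH}=0$ and $[X_{f_{i}},R]=-X_{\lbrace f_{i},-1\rbrace}=-X_{Rf_{i}}=0$. Thus, precisely because the system is good and $Rf_{i}=0$, the Reeb field $R$ is an additional symmetry of $X_{H}$ that commutes with every $X_{f_{i}}$ and is tangent to $M_{f}$ (as $Rf_{i}=\partial f_{i}/\partial z=0$). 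Adjoining $R$ to $\langle X_{f_{1}},\ldots,X_{f_{n}}\rangle$ gives a Lie algebra in which $R$ spans a central ideal, so the extended algebra is again solvable; it consists of $n+1$ symmetries of $X_{H}$ tangent to $M_{f}$, and Theorem \ref{theorem1}, applied in local coordinates on $M_{f}$, then yields integrability by quadratures.

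The step I expect to be the main obstacle is verifying that these $n+1$ vector fields are linearly independent at each point of $M_{f}$, as Theorem \ref{theorem1} requires. Here I would exploit $Rf_{i}=0$ once more: it forces the $f_{i}$ to be independent of the contact coordinate $z$, so $M_{f}$ splits as the level set $\lbrace f_{i}=\alpha_{i}\rbrace$ in the $(q,p)$ variables times the free $z$-line, and on $M_{f}$ the Jacobi bracket of the $f_{i}$ reduces to the ordinary symplectic bracket in $(q,p)$. Functional independence of the $f_{i}$ then makes their symplectic gradients --- the $(q,p)$-parts of $X_{f_{1}},\ldots,X_{f_{n}}$ --- independent and tangent to the $n$-dimensional $(q,p)$-level set, while $R=\partial/\partial z$ spans the complementary $z$-direction; the resulting block structure shows the $n+1$ fields form a frame on $M_{f}$. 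Establishing this rank statement cleanly, together with the bijection between functions and contact vector fields underlying the Jacobi antihomomorphism, is where the real work lies.
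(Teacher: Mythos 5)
Your proposal is correct and follows the same overall strategy as the paper's proof: realize $M_{f}$ as an $(n+1)$-dimensional submanifold, check that $X_{H}$, the $X_{f_{i}}$ and the Reeb field are tangent to it, adjoin $R$ as the crucial $(n+1)$-st symmetry (central in the extended algebra, so solvability survives), and invoke Theorem \ref{theorem1}. The one genuine difference is how you obtain the antihomomorphism $X_{\lbrace f,g\rbrace}=-[X_{f},X_{g}]$: you derive it from the facts that each $X_{f}$ is a contact vector field ($\mathcal{L}_{X_{f}}\theta=-(Rf)\theta$), that contact vector fields form a Lie subalgebra in bijection with functions via $f=-\theta(X_{f})$, and the identity $\lbrace f,g\rbrace=\theta([X_{f},X_{g}])$ stated in Section \ref{sec2}; this yields the identity for \emph{all} $f,g$, whereas the paper first proves $R(\lbrace f,g\rbrace)=0$ for $Rf=Rg=0$ by a Lie-derivative computation and then runs a Jacobi-identity calculation valid only for such $f,g$ (which is all the theorem needs). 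Your route buys generality and brevity, at the cost of invoking the standard — but unproved in the paper — correspondence between functions and contact vector fields. You also explicitly verify pointwise linear independence of $R,X_{f_{1}},\ldots,X_{f_{n}}$ on $M_{f}$; the block-structure argument (the $(q,p)$-parts of the $X_{f_{i}}$ are the symplectic gradients of $z$-independent, functionally independent functions, while $R=\partial/\partial z$ has no $(q,p)$-part) is correct, and it supplies a hypothesis of Theorem \ref{theorem1} that the paper's proof leaves implicit. One small sign remark: with the paper's convention $X_{f}\lrcorner\theta=-f$, your identification $R=X_{-1}$ is the right one (the paper instead calls $R$ the Hamiltonian vector field of the function $1$, a harmless slip), and with it your computations $[X_{H},R]=-X_{\lbrace H,-1\rbrace}=0$ and $[X_{f_{i}},R]=0$ go through exactly as in the paper.
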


Proof: Let $(q^{1},\cdots,q^{n},p_{1},\cdots,p_{n},z)$ be canonical coordinates for $(M,\theta,H)$. Since $RH=0$, i.e. $\frac{\partial H}{\partial z}=0$, then the Hamilton equations of motion are
\begin{equation}\label{eq4}
\left\lbrace \begin{array}{c}
\dot{q}^{1}=\frac{\partial H}{\partial p_{1}}(q^{1},\cdots,q^{n},p_{1},\cdots,p_{n}), \\
\vdots \\
\dot{q}^{n}=\frac{\partial H}{\partial p_{n}}(q^{1},\cdots,q^{n},p_{1},\cdots,p_{n}), \\
\dot{p}_{1}=-\frac{\partial H}{\partial q^{1}}(q^{1},\cdots,q^{n},p_{1},\cdots,p_{n}), \\
\vdots \\
\dot{p}_{n}=-\frac{\partial H}{\partial q^{n}}(q^{1},\cdots,q^{n},p_{1},\cdots,p_{n}),\\
\dot{z}=p_{i}\frac{\partial H}{\partial p_{i}}-H.
\end{array} \right.
\end{equation}

From the functional independence of the functions $f_{1},\ldots,f_{n}$ on $M_{f}$, we have that $M_{f}$ is a smooth submanifold of $M$ of dimension $2n+1-n=n+1$ and since $Rf_{i}=0$ we can find local coordinates on $M_{f}$ of the form $(x^{1},\ldots,x^{n},z)$. Again as in the symplectic and cosymplectic frameworks, we have that if $f$ is a constant of motion of the good contact Hamiltonian $(M,\theta,H)$ with $Rf=0$ then the Hamiltonian vector field $X_{f}$ for $f$ is a symmetry of $X_{H}$, indeed, for all $f,g\in C^{\infty}(M)$ with $Rf=Rg=0$ we have
\begin{equation}
\begin{split}
R(\lbrace f,g\rbrace) &=R(X_{g}f)\\
&=R(df(X_{g}))\\
&=R(d\theta(X_{f},X_{g}))\\
&=(L_{R}d\theta)(X_{f},X_{g})+d\theta(L_{R}X_{f},X_{g})+d\theta(X_{f},L_{R}X_{g})\\
&=-dg(L_{R}X_{f})+df(L_{R}X_{g})\\
&=dg(L_{X_{f}}R)-df(L_{X_{g}}R)\\
&=L_{X_{f}}(dg(R))-(L_{X_{f}}dg)(R)-L_{X_{g}}(df(R))+(L_{X_{g}}df)(R)\\
&=-d(X_{f}g)(R)+d(X_{g}f)(R)\\
&=-R(\lbrace g,f\rbrace)+R(\lbrace f,g\rbrace)\\
&=2R(\lbrace f,g\rbrace),
\end{split}
\end{equation}
therefore $R(\lbrace f,g\rbrace)=2R(\lbrace f,g\rbrace)$, so $R(\lbrace f,g\rbrace)=0$. And
\begin{equation}
\begin{split}
-[X_{f},X_{g}]h&=-X_{f}X_{g}h+X_{g}X_{f}h\\
&=-X_{f}\lbrace h,g\rbrace+X_{g}\lbrace h,f\rbrace\\
&=-\lbrace \lbrace h,g\rbrace,f\rbrace+\lbrace \lbrace h,f\rbrace,g\rbrace\\
&=\lbrace \lbrace g,h\rbrace,f\rbrace+\lbrace \lbrace h,f\rbrace,g\rbrace\\
&=-\lbrace \lbrace f,g\rbrace, h\rbrace \hspace{1cm}(\textit{Jacoby identity})\\
&=X_{\lbrace f,g\rbrace}h.
\end{split}
\end{equation}
So we conclude that $X_{\lbrace f,g\rbrace}=-[X_{f},X_{g}]$. So since the functions $f_{1},\ldots,f_{n}$ are constants of motion with $Rf_{i}=0$ then the vector fields $X_{f_{1}},\cdots,X_{f_{n}}$ are symmetries of the Hamiltonian vector field $X_{H}$; we also have that $R$ is a symmetry of $X_{H}$ since $[R,X_{H}]=X_{\lbrace H,1\rbrace}=0$ ($R$ is the Hamiltonian vector field for the function $1$) and $R,X_{f_{1}},\cdots,X_{f_{n}}$ generate a solvable Lie algebra with the Lie bracket of vector fields. In addition we have that the vector fields $R,X_{f_{1}},\ldots,X_{f_{n}}$ are tangent to the submanifold $M_{f}$ because $X_{f_{i}}f_{j}=\lbrace f_{j},f_{i}\rbrace$ which is zero on $M_{f}$ and $Rf_{i}=0$. Then the solutions of the Hamilton equations that live on $M_{f}$ are solutions of the vector differential equation $\dot{x}=X_{H}(x)$ with $(x^{1},\ldots,x^{n+1})$ local coordinates on $M_{f}$ with $x^{n+1}=z$.  Since the vector fields $R,X_{f_{1}},\ldots,X_{f_{n}}$ generate a solvable Lie algebra of symmetries of the vector field $X_{H}$, then the system defined by the equation $\dot{x}=X_{H}(x)$ is solvable by quadratures (theorem \ref{theorem1}), that is, the solutions of the Hamilton equations that live in $M_{f}$ can be found by quadratures.
\rule{5pt}{5pt}

Now let us see the time-dependent contact case. Again we restrict ourselves to good Hamiltonian systems.
\begin{de}
We say that a cocontact Hamiltonian system $(M,\theta,\eta,H)$ is good if $H$ is constant along the flow of the contact Reeb vector field $R_{z}$ ($R_{z}H=0$).
\end{de}

Observe that as in the cosymplectic case, the Hamiltonian function $H$ is not necessarily a constant of motion (the condition $R_{t}H=0$ is not required). 

\begin{te}\label{theorem5}
Let $(M,\theta,\eta,H)$ be a good cocontact Hamiltonian system with $(M,\theta,\eta)$ a cocontact manifold of dimension $2n+2$. If we have $n$ constants of motion $f_{1},\ldots,f_{n}$ with $R_{z}f_{i}=0$ such that
\begin{enumerate}
\item $\lbrace f_{i},f_{j}\rbrace =c_{ij}^{k}f_{k}$, with $c_{ij}^{k}\in\mathbb{R}$,
\item the Lie algebra generated by $f_{1},\ldots,f_{n}$ is a solvable Lie algebra with the Lie bracket defined by the Poisson bracket,
\item on the set $M_{f}=\lbrace x\in M : f_{i}(x)=\alpha_{i}, \alpha_{i}\in \mathbb{R}\rbrace$ the functions $f_{1},\ldots,f_{n}$ are functionally independent and, 
\item $c_{ij}^{k}\alpha_{k}=0$ for $i,j=1,\ldots,n$,
\end{enumerate}
then $M_{f}$ is a smooth submanifold of $M$ of dimension $n+2$, and the solutions of the Hamilton equations that live in $M_{f}$ can be found by quadratures.
\end{te}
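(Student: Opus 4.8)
The plan is to run the same reduction as in the contact case (Theorem~\ref{theorem4}), now carrying along the time direction exactly as in the cosymplectic case (Theorem~\ref{theorem3}). First I would record that $M_f$ is a smooth $(n+2)$-dimensional submanifold: functional independence of $f_1,\dots,f_n$ together with the Regular Level Set Theorem gives dimension $2n+2-n=n+2$, and I would prove a lemma analogous to Lemma~\ref{lem1} producing local coordinates $(y^1,\dots,y^n,z,t)$ around any point of $M_f$. The content of that lemma is that both the $z$-column and the $t$-column of the Jacobian of $(f_1,\dots,f_n)$ are redundant: the $z$-column vanishes because $R_zf_i=0$, and the $t$-column is a combination of the $q,p$-columns because the constant-of-motion relation together with $R_zH=0$ gives $\partial f_i/\partial t=-\{f_i,H\}$, an expression in the $q,p$-derivatives only. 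Hence $z$ and $t$ survive as free coordinates on $M_f$.

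Next I would assemble the symmetries. The evolution field is $E_H=X_H+R_t$. I would establish, just as in Theorem~\ref{theorem4}, that for functions with $R_zf=R_zg=0$ the assignment $f\mapsto X_f$ is a Jacobi-bracket antihomomorphism, $X_{\{f,g\}}=-[X_f,X_g]$, so that $X_{f_1},\dots,X_{f_n}$ generate a solvable Lie algebra (anti-isomorphic to the solvable algebra of the $f_i$). I would then adjoin $R_z$: since $R_z=X_{-1}$, the good condition $R_zH=0$ gives $\{H,1\}=0$, hence $[R_z,X_H]=0$, and $[R_t,R_z]=0$ in canonical coordinates, so $R_z$ is a symmetry of $E_H$; moreover $\{1,f_i\}=R_zf_i=0$ gives $[R_z,X_{f_i}]=0$, so $R_z$ is central and the span of $R_z,X_{f_1},\dots,X_{f_n}$ is again solvable.

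The remaining structural check is that $X_{f_1},\dots,X_{f_n}$ are symmetries of $E_H$. Here I would use the identity $[R_t,X_f]=X_{R_tf}$ --- immediate in canonical coordinates, since $R_t=\partial/\partial t$ merely differentiates the coefficients of $X_f$ and the formula for $X_f$ is natural in $f$ --- together with $[X_H,X_{f_i}]=X_{\{f_i,H\}}$ from the antihomomorphism. Combining, $[E_H,X_{f_i}]=X_{\{f_i,H\}+R_tf_i}=X_{E_Hf_i}=0$, because $R_zH=0$ collapses $E_Hf_i$ to $\{f_i,H\}+R_tf_i$ and $f_i$ is a constant of motion. I would then verify tangency to $M_f$ (from $X_{f_i}f_j=\{f_j,f_i\}=c_{ji}^k\alpha_k=-c_{ij}^k\alpha_k=0$ on $M_f$ by conditions (i) and (iv), and $R_zf_i=0$) and pointwise independence on $M_f$ of $R_z,X_{f_1},\dots,X_{f_n}$, the latter following from functional independence once one notes that $\partial f_i/\partial t$ is determined by the $q,p$-derivatives.

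Finally I would conclude by the reduction of Section~\ref{sec3}. Since $X_H\lrcorner\eta=0$, the restriction of $E_H$ to $M_f$ in coordinates $(y^1,\dots,y^n,z,t)$ has the form (spatial field)$+\,\partial/\partial t$ with $\dot t=1$, so only the $(n+1)$-dimensional spatial system in $(y,z)$ must be integrated; with the $n+1$ independent, solvable symmetries $R_z,X_{f_1},\dots,X_{f_n}$ of $E_H$ in hand, the remark at the end of Section~\ref{sec3} (applying Theorem~\ref{theorem1}) yields integrability by quadratures. The main obstacle I anticipate is not any single hard step but the bookkeeping: confirming the identity $[R_t,X_f]=X_{R_tf}$ intrinsically rather than only in coordinates, and checking carefully that adjoining $R_z$ preserves both solvability and the pointwise independence needed to feed Theorem~\ref{theorem1}, given that here $H$ itself need not be a constant of motion.
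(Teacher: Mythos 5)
Your proposal is correct and follows essentially the same route as the paper's proof: the same coordinate lemma on $M_{f}$ (the $z$-column of the Jacobian vanishes by $R_{z}f_{i}=0$ and the $t$-column is a combination of the $q,p$-columns via the constant-of-motion relation and $R_{z}H=0$), the same antihomomorphism $X_{\lbrace f,g\rbrace}=-[X_{f},X_{g}]$ for functions annihilated by $R_{z}$, the identity $[R_{t},X_{f}]=X_{R_{t}f}$ giving $[E_{H},X_{f_{i}}]=0$, adjoining $R_{z}$ as an extra symmetry, and concluding via Theorem \ref{theorem1} and the time-dependent remark at the end of Section \ref{sec3}. The only differences are cosmetic: you check $[R_{t},X_{f}]=X_{R_{t}f}$ in Darboux coordinates where the paper does an intrinsic contraction computation with $\eta$ and $d\theta$, and you are somewhat more explicit than the paper about why adjoining the central element $R_{z}$ preserves solvability and about the pointwise linear independence required to apply Theorem \ref{theorem1}.
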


The following lemma, analogous to lemma \ref{lem1}, is essential for the proof of theorem \ref{theorem5}.
\begin{lem}\label{lem2}
If $f_{1},f_{2},\ldots,f_{k}:M \longrightarrow \mathbb{R}$ are functionally independent constants of motion with $R_{z}f_{i}=0$, then around any point in $M_{f}$ defined as in Theorem \ref{theorem5}, we can find local coordinates $(t,y^{1},\cdots,y^{2n-k},z)$.
\end{lem}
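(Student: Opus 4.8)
The plan is to adapt the proof of Lemma \ref{lem1} to the cocontact setting, where the key new ingredient is the presence of two distinguished directions, $R_z=\frac{\partial}{\partial z}$ and $R_t=\frac{\partial}{\partial t}$, and where the constants of motion are assumed to satisfy $R_zf_i=0$, i.e.\ $\frac{\partial f_i}{\partial z}=0$ in canonical coordinates $(t,q^1,\ldots,q^n,p_1,\ldots,p_n,z)$. First I would invoke the Regular Level Set Theorem exactly as in Lemma \ref{lem1}: functional independence of $f_1,\ldots,f_k$ makes $M_f$ a smooth submanifold of $M$ of dimension $(2n+2)-k$. The real content is to produce a chart on $M_f$ in which the two coordinates $t$ and $z$ survive explicitly, so that the evolution vector field $E_H$ restricted to $M_f$ retains its $\frac{\partial}{\partial t}$ component and the dissipative $z$-equation can be integrated last.

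Next I would run the implicit-function-theorem argument on the remaining $2n$ canonical coordinates. Fix $p\in M_f$ and a Darboux chart $(U,\phi)$. The total derivative of $F=(f_1,\ldots,f_k)$ at $p$ has rank $k$; crucially, since $R_zf_i=\frac{\partial f_i}{\partial z}=0$, the column corresponding to $z$ vanishes identically, and the column corresponding to $t$ need not vanish but plays no role because we will keep $t$ as a free coordinate. Hence the $k$ independent columns can be chosen from among the $2n$ columns associated to $q^1,\ldots,q^n,p_1,\ldots,p_n$. Relabelling these as $x=(x^1,\ldots,x^k)$ and the complementary ones as $y=(y^1,\ldots,y^{2n-k})$, the Jacobian $\partial(f_1,\ldots,f_k)/\partial(x^1,\ldots,x^k)$ is nonsingular at $p$, so the Implicit Function Theorem yields a smooth map $g$ with $x=g(t,y,z)$ on $M_f$ near $p$. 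I would then define the chart $\varphi(t,y^1,\ldots,y^{2n-k},z)$ by substituting $x=g(t,y,z)$ into $\phi$, giving local coordinates $(t,y^1,\ldots,y^{2n-k},z)$ on $M_f$, which is the claimed statement.

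The main obstacle, and the one subtle point that distinguishes this from the cosymplectic Lemma \ref{lem1}, is handling both privileged directions simultaneously: one must verify that neither $t$ nor $z$ is ever forced into the set of dependent variables $x$ that get solved for. For $z$ this is automatic from the hypothesis $R_zf_i=0$, which kills the entire $z$-column of the Jacobian and guarantees $z$ can be retained as a free coordinate. For $t$ the point is simply that $t$ need not be solved for: since the rank-$k$ minor can already be extracted from the $2n$ spatial columns, we are free to declare $t$ a coordinate regardless of whether $f_i$ depends on $t$. I would remark explicitly that these two observations together are what let us arrange the coordinates in the order $(t,y^1,\ldots,y^{2n-k},z)$, so that in the proof of Theorem \ref{theorem5} the evolution vector field $E_H=X_H+R_t$ has $\dot t=1$ (integrated trivially) and the $z$-equation is integrated by a final quadrature, with the reduction procedure of Theorem \ref{theorem1} applied to the remaining $y$-variables using the symmetries $R_t,R_z,X_{f_1},\ldots,X_{f_n}$.
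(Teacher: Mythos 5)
Your overall strategy (Regular Level Set Theorem, then the Implicit Function Theorem applied so that both $t$ and $z$ survive as coordinates on $M_f$) is the same as the paper's, and your treatment of $z$ is correct: $R_z f_i=0$ kills the $z$-column of the Jacobian identically, so $z$ can always be kept free. However, there is a genuine gap in your treatment of $t$. You assert that ``the rank-$k$ minor can already be extracted from the $2n$ spatial columns'' and that the $t$-column ``plays no role because we will keep $t$ as a free coordinate''; this is circular, since whether $t$ may be kept as a free coordinate is precisely the question of whether a rank-$k$ minor can be chosen avoiding the $t$-column, and that is not automatic from functional independence plus $R_zf_i=0$. Indeed, it can fail when the $f_i$ are not constants of motion: $f_1=t$ is functionally independent and satisfies $R_zf_1=0$, yet on $M_f=\lbrace t=\alpha_1\rbrace$ the function $t$ is constant and cannot serve as a coordinate. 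Nowhere in your argument do you invoke the hypothesis that the $f_i$ are constants of motion, and that hypothesis is exactly what rules out such examples.

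The paper closes this gap explicitly: since the system is good ($R_zH=0$) and $R_zf_i=0$, the constant-of-motion condition $\lbrace f_i,H\rbrace-f_iR_zH+R_tf_i=0$ reduces in Darboux coordinates to
\begin{equation*}
\frac{\partial f_{i}}{\partial q^{j}}\frac{\partial H}{\partial p_{j}}-\frac{\partial f_{i}}{\partial p_{j}}\frac{\partial H}{\partial q^{j}}+\frac{\partial f_{i}}{\partial t}=0,
\end{equation*}
which exhibits the $t$-column of the Jacobian of $F=(f_1,\ldots,f_k)$ as a linear combination of the $q,p$-columns; this is the same observation that, in Lemma \ref{lem1}, justified discarding the last column there. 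Only after this step can one conclude that the column space is spanned by the $2n$ spatial columns, choose a nonsingular $k\times k$ minor among them, and legitimately solve for $k$ of the $q,p$ variables while keeping both $t$ and $z$ free. Inserting this one computation makes your proof complete and essentially identical to the paper's.
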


Proof: From the functional independence of the functions $f_{1},\ldots,f_{k}$ on $M_{f}$, we have that $M_{f}$ is a smooth submanifold of $M$ of dimension $2n+2-k$. Let $(t,q^{1},\cdots,q^{n},p_{1},\cdots,p_{n},z)$ be canonical coordinates for $(M,\theta,\eta,H)$, then since $\frac{\partial f_{i}}{\partial z}=0$ ($R_{z}f_{i}=0$) then $\frac{\partial f_{i}}{\partial q^{j}}\frac{\partial H}{\partial p_{j}}-\frac{\partial f_{i}}{\partial p_{j}}\frac{\partial H}{\partial q^{j}}+\frac{\partial f_{i}}{\partial t}=0$ ($\lbrace f_{i},H\rbrace+R_{t}f=0$), so we have an analogous development to the proof of lemma \ref{lem1} and we conclude that on $M_{f}$ we can find local coordinates $(t,y^{1},\cdots,y^{2n+1-k})$. In addition as in the contact case since $\frac{\partial f_{i}}{\partial z}=0$ then on $M_{f}$ we have local coordinates $(t,y^{1},\ldots,y^{2n-k},z)$.
\rule{5pt}{5pt}

Now the proof of theorem \ref{theorem5} follows: 

Proof: From the functional independence of the functions $f_{1},\ldots,f_{n}$ on $M_{f}$, we have that $M_{f}$ is a smooth submanifold of $M$ of dimension $2n+2-n=n+2$ and we have local coordinates $(t,x^{1},\ldots,x^{n},z)$. We have that the Hamiltonian vector fields $X_{f_{i}}$ for $f_{i}$ is a symmetry of $E_{H}$, indeed, let us see that  $X_{\lbrace f,g\rbrace}=-[X_{f},X_{g}]$ and $X_{R_{t}f}=-[X_{f},R_{t}]$ for all $f,g\in C^{\infty}(M)$ with $R_{z}f=R_{z}g=0$, indeed,
\begin{equation}
\begin{split}
R_{z}(\lbrace f,g\rbrace) &=R_{z}(X_{g}f)\\
&=R_{z}(df(X_{g}))\\
&=R_{z}(d\theta(X_{f},X_{g})+(R_{t}f)\eta(X_{g}))\\
&=(L_{R_{z}}d\theta)(X_{f},X_{g})+d\theta(L_{R_{z}}X_{f},X_{g})+d\theta(X_{f},L_{R_{z}}X_{g})\\
&=-dg(L_{R_{z}}X_{f})+(R_{t}g)\eta(L_{R_{z}}X_{f})+df(L_{R_{z}}X_{g})-(R_{t}f)\eta(L_{R_{z}}X_{g})\\
&=dg(L_{X_{f}}R_{z})-df(L_{X_{g}}R_{z})\\
&=L_{X_{f}}(dg(R_{z}))-(L_{X_{f}}dg)(R_{z})-L_{X_{g}}(df(R_{z}))+(L_{X_{g}}df)(R_{z})\\
&=-d(X_{f}g)(R_{z})+d(X_{g}f)(R_{z})\\
&=-R_{z}(\lbrace g,f\rbrace)+R_{z}(\lbrace f,g\rbrace)\\
&=2R_{z}(\lbrace f,g\rbrace),
\end{split}
\end{equation}
therefore $R_{z}(\lbrace f,g\rbrace)=2R_{z}(\lbrace f,g\rbrace)$, so $R_{z}(\lbrace f,g\rbrace)=0$. And
\begin{equation}
\begin{split}
-[X_{f},X_{g}]h&=-X_{f}X_{g}h+X_{g}X_{f}h\\
&=-X_{f}\lbrace h,g\rbrace+X_{g}\lbrace h,f\rbrace\\
&=-\lbrace \lbrace h,g\rbrace,f\rbrace+\lbrace \lbrace h,f\rbrace,g\rbrace\\
&=\lbrace \lbrace g,h\rbrace,f\rbrace+\lbrace \lbrace h,f\rbrace,g\rbrace\\
&=-\lbrace \lbrace f,g\rbrace, h\rbrace \hspace{1cm}(\textit{Jacoby identity})\\
&=X_{\lbrace f,g\rbrace}h.
\end{split}
\end{equation}
So we conclude that $X_{\lbrace f,g\rbrace}=-[X_{f},X_{g}]$. On the other hand 
\begin{equation}
-[X_{f},R_{t}]\lrcorner\eta=\eta([R_{t},X_{f}])=R_{t}(\eta(X_{f}))-X_{f}(\eta(R_{t}))=-X_{f}(1)=0
\end{equation}
and for all $X\in\mathfrak{X}(M)$ we have
\begin{equation}
\begin{split}
(-[X_{f},R_{t}]\lrcorner d\theta)(X)&=d\theta([R_{t},X_{f}],X)\\
&=Xd\theta(R_{t},X_{f})-R_{t}d\theta(X,X_{f})+X_{f}d\theta(X,R_{t})\\
&-d\theta([X,R_{t}],X_{f})+d\theta([X,X_{f}],R_{t})\\
&=R_{t}d\theta(X_{f},X)+d\theta(X_{f},[X,R_{t}])\\
&=R_{t}(df(X)-(R_{t}f)\eta(X))+df([X,R_{t}])-(R_{t}f)\eta([X,R_{t}])\\
&=R_{t}(df(X))-(R_{t}R_{t}f)\eta(X)-(R_{t}f)R_{t}(\eta(X))+X(df(R_{t}))\\
&-R_{t}(df(X))-(R_{t}f)(X(\eta(R_{t}))-R_{t}(\eta(X)))\\
&=-(R_{t}R_{t}f)\eta(X)+X(R_{t}f)\\
&=d(R_{t}f)(X)-(R_{t}R_{t}f)\eta(X),
\end{split}
\end{equation}
therefore $-[X_{f},R_{t}]\lrcorner d\theta=d(R_{t}f)-(R_{t}R_{t}f)\eta$, so that $X_{R_{t}f}=-[X_{f},R_{t}]$.
\begin{equation}
[E_{H},X_{f}]=[X_{H}+R_{t},X_{f}]=[X_{H},X_{f}]+[R_{t},X_{f}],
\end{equation}
so that
\begin{equation}
\begin{split}
[E_{H},X_{f}]&=X_{\lbrace f,H\rbrace}+X_{R_{t}f}\\
&= X_{\lbrace f,H\rbrace}+R_{t}f\\
&=0.
\end{split}
\end{equation}
We also have that $R_{z}$ is a symmetry of $E_{H}$ since $[R_{z},E_{H}]=[R_{z},X_{H}]+[R_{z},R_{t}]=X_{\lbrace H,1\rbrace}+X_{R_{t}(1)}=0$ and $R_{z},X_{f_{1}},\cdots,X_{f_{n}}$ generate a solvable Lie algebra with the Lie bracket of vector fields. In addition we have that the vector fields $R_{z},X_{f_{1}},\ldots,X_{f_{n}}$ are tangent to the submanifold $M_{f}$ because $X_{f_{i}}f_{j}=\lbrace f_{j},f_{i}\rbrace$ which is zero on $M_{f}$ and $R_{z}f_{i}=0$; also the evolution vector field $E_{H}$ is tangent to the submanifold $M_{f}$ since $E_{H}f_{i}=0$ for $i=1,\ldots,n$. Then the solutions of the Hamilton equations that live on $M_{f}$ are solutions of the vector differential equation $\dot{x}=E_{H}(t,x)$ with $(t,x^{1},\ldots,x^{n+1})$ local coordinates on $M_{f}$ with $x^{n+1}=z$.  Since the vector fields $R_{z},X_{f_{1}},\ldots,X_{f_{n}}$ generate a solvable Lie algebra of symmetries of the vector field $E_{H}$, then the system defined by the equation $\dot{x}=E_{H}(t,x)$ is solvable by quadratures (theorem \ref{theorem1}), that is, the solutions of the Hamilton equations that live in $M_{f}$ can be found by quadratures.
\rule{5pt}{5pt}

It is worth remarking that in this section we have restricted ourselves to good contact and cocontact Hamiltonian systems; it would be interesting to study dissipative Hamiltonian systems, for example the damped harmonic oscillator (with and without external forces) is represented as a (nongood) contact Hamiltonian system \cite{BG2021,GLR2022,LL2019,BCT2017,Letal2022}. For general contact Hamiltonian systems the concept of constants of motion (conserved quantity) is generalized to the concept of dissipated quantities, namely functions dissipated at the same rate as the Hamiltonian function \cite{LL2020,BG2021,GGMRR2020}. The study of the implications of the existence of a solvable Lie algebra of dissipated quantities of a contact or cocontact Hamiltonian system is considered for future works.

\section{Conclusions}

We have shown that a nonabelian notion of integrability by quadratures is possible even in the contact and cocontact frameworks. The existence of a solvable Lie algebra of constants of motion for a symplectic, cosymplectic, contact or cocontact Hamiltonian system allows us to find solutions of the equations of motion by quadratures. Of course we have the Liouville theorem as a corollary because every Abelian Lie algebra is trivially a solvable Lie algebra.

\section*{Acknowledgments}

The author wishes to thank CONACYT (México) for the financial support through a postdoctoral fellowship in the program Estancias Posdoctorales por México 2022. In addition the author thanks Dr. Adrian Escobar for his comments.

\bibliography{refs} 
\bibliographystyle{unsrt} 

\end{document}